\documentclass[12pt, onecolumn]{IEEEtran}
\IEEEoverridecommandlockouts
\usepackage[utf8]{inputenc} 
\usepackage[T1]{fontenc}
\usepackage{url}
\usepackage{ifthen}
\usepackage{cite}
\usepackage[cmex10]{amsmath} 
\usepackage{amssymb}
\usepackage{graphicx}
\usepackage{epstopdf}
\usepackage{epsfig}
\usepackage{multirow}
\usepackage{amsthm}
\usepackage{comment}
\usepackage{caption}
\usepackage{enumitem}
\usepackage{amsmath}
\usepackage{color}
\usepackage{bbm}
\usepackage{algorithm}
\usepackage{algpseudocode}
\usepackage{csquotes}
\usepackage{subfiles}
\usepackage{caption, multirow, makecell}
\usepackage{array}
\usepackage{caption}
\usepackage{afterpage}
\usepackage{dblfloatfix}
\usepackage{hyperref}
\usepackage{booktabs}
\usepackage{tikz}
\usepackage{longtable}
\usepackage{blkarray}
\usepackage{adjustbox}
\newtheorem{thm}{Theorem}
\newtheorem{lem}{Lemma}

\newtheorem{cor}{Corollary}
\newtheorem{example}{Example}

\newtheorem{defn}{Definition}

\def\BibTeX{{\rm B\kern-.05em{\sc i\kern-.025em b}\kern-.08em
		T\kern-.1667em\lower.7ex\hbox{E}\kern-.125emX}}

\begin{document}
	
	\title{A General Plotkin-type Bound on Function-Correcting Codes with Wyner-Graham Distance}
\author{\IEEEauthorblockN{Kanchana Lokshmii Jagatti, K. Hareesh, N.T. Rashid Ummer and B. Sundar Rajan} \\
	\IEEEauthorblockA{\textit{Department of Electrical Communication Engineering} \\
		\textit{Indian Institute of Science,}
		Bangalore, India \\
		\{kanchanal, hareeshk, rashidummer, bsrajan\}@iisc.ac.in}
}

	\maketitle
	
	\begin{abstract}
		Function-correcting codes (FCCs) are designed to protect a specified function evaluation of messages at a higher level than the level of protection for messages, against errors while reducing the redundancy required for reliable communication. FCCs have thus far been studied for channels matched to various distances, including the Hamming and Lee distances. Every function partitions the message space into preimage sets corresponding to its distinct function values. Existing Plotkin-type bounds on the optimal redundancy of FCCs under the studied distances, applicable to arbitrary functions on the message space, depend on the pairwise distances among all the message vectors. This makes these bounds difficult to compute. We derive a general Plotkin-type bound on the optimal redundancy of FCCs under Wyner–Graham distances, which include the Hamming and Lee distances as special cases. Our bound depends only on the cardinalities of the preimage sets and the sum of pairwise distances only among vectors within each preimage set. This approach significantly reduces the computations required to evaluate the existing Plotkin-type bounds and yields simplified bounds that are easier to compute for specific functions. We obtain simplified Plotkin-type bound for linear functions under the Wyner-Graham distance. Furthermore, the existing simplified bounds for linear functions under the Hamming and Lee distances are recovered as special cases of the proposed bound. We also obtain simplified bounds for several important classes of functions, including the Hamming weight function, the Hamming weight distribution function, the monomial functions under the Hamming distance, and the modular sum function, the Lee weight function, and the Lee weight distribution function under the Lee distance.

	\end{abstract}

	\begin{IEEEkeywords}
		Function-correcting codes,  Hamming distance, Lee distance, optimal redundancy, Plotkin-type bound, Wyner-Graham Distance. 
	\end{IEEEkeywords}
	
	\section{Introduction}
	
Error-correcting codes (ECCs) are traditionally designed to enable reliable recovery of an entire message transmitted over a noisy channel. In many communication scenarios the receiver is interested in recovering the complete message and also interestied  in evaluating a specific function of the message. Motivated by this observation, Lenz \textit{et al.} in \cite{LBWY} introduced the notion of \textit{function-correcting codes} (FCCs) in the Hamming distance setting. FCCs are designed to protect function evaluations of messages at a higher level than the level of protection for messages, against errors while reducing the redundancy required for reliable communication with only conventional ECCs. The authors in \cite{LBWY} showed that FCCs can achieve substantially lower redundancy than conventional ECCs when the function value is also required at the receiver. Furthermore, the authors established a connection between FCCs and \textit{irregular-distance codes}. Using this connection, general upper and lower bounds on the optimal redundancy of FCCs were derived. These bounds depend heavily on the function under consideration. Simplified bounds were also obtained for specific functions, including the Hamming weight function and the Hamming weight distribution function. 

Following the introduction of FCCs, several extensions and generalizations have been investigated. In \cite{PR}, the authors obtained a lower bound on the redundancy of FCCs using the concept of function-dependent graph and further obtained a simplified lower bound for linear functions. An upper bound for the binary field, staying within a logarithmic factor of the known lower bound in \cite{LBWY}, was presented in \cite{LS}. Improved lower bound and upper bound on optimal redundancy for the Hamming weight function were derived in \cite{ZXZG}. FCCs for a class of functions that assume only a limited number of values within a given Hamming ball, referred to as locally bounded functions, were studied in \cite{CSFH}. In \cite{CSFH2}, FCCs were generalized to \textit{function-correcting partition codes}, which are defined directly on partitions of the message space rather than on specific target functions. Another generalization of FCCs, termed \textit{FCCs with data protection}, was proposed in \cite{CSFH3}, which considers the simultaneous protection of both the message and its function value. FCCs with data protection for a specific class of function, called the Hamming code membership function, was investigated in \cite{SAARB,AAS}.

In \cite{LBWY, PR, LS, ZXZG, CSFH, CSFH2, CSFH3, SAARB, AAS}, FCCs were studied under the Hamming distance. The FCC framework has also been extended to several other channel models, including symbol-pair read channels \cite{XLC}, $b$-symbol read channels \cite{AAY, SR}, channels matched to homogeneous distance \cite{LL}, Lee distance channels \cite{GA, HRS}, and insertion-deletion channels \cite{AA}. A central problem in the study of FCCs under any distance framework is the characterization of the optimal redundancy required to protect a given function. Plotkin-type lower bounds on the optimal redundancy, applicable to arbitrary functions on the message space, were obtained in \cite{LBWY, XLC, AAY, LL, HRS, AA} for the corresponding distance frameworks. These bounds are expressed in terms of a matrix named the distance requirement matrix (DRM) in \cite{LBWY} associated with the function. Since the DRM depends on the distance requirements between all the pairs of message vectors corresponding to distinct function values, its computation generally requires computing pairwise distances of a large number of vector pairs. Consequently, evaluating the resulting bounds becomes computationally intensive as the size of the message space grows. A simplified lower bound on the redundancy of FCCs for linear functions under the Hamming distance was derived in \cite{PR}. Subsequently, similar bounds for linear functions under the $b$-symbol read channel and the Lee distance channel were obtained in \cite{SR} and \cite{GA}, respectively. Simplified lower bounds for specific functions have been derived in some of the existing works; however, only for certain parameter regimes. For instance, lower bounds on the optimal redundancy of FCCs for the Hamming weight function were derived in \cite{LBWY, ZXZG}, but only for $k > t$, where $k$ denotes the message length and $t$ denotes the number of channel errors to be corrected. For the Lee weight function, a lower bound was derived in \cite{HRS} for $q \ge 5$ and $t=\lfloor \frac{q-3}{2} \rfloor$, while another lower bound was obtained in \cite{GA} for $k > \lceil \frac{t+1}{\lfloor \frac{q}{2} \rfloor} \rceil$, where $q$ denotes the alphabet size. Therefore, deriving simplified and computationally tractable bounds for arbitrary functions under various distance frameworks is of significant interest.

Every function naturally induces a partition of the message space into preimage sets corresponding to its distinct function values. This partition provides a useful representation of the function for studying function correction. The main idea of this paper is to exploit this partition to derive Plotkin-type bounds that depend only on the cardinalities of the preimage sets and the sums of pairwise distances among vectors within each preimage set, rather than on pairwise distances between vectors across different preimage sets. Two important distances that have been studied in the context of FCCs are the Hamming distance and the Lee distance. Both belong to a broader class of distances satisfying specific conditions introduced by Wyner and Graham in \cite{WG}. Motivated by this observation, we consider FCCs under this general distance framework, referred to in this paper as the \textit{Wyner-Graham distance} framework. The formal definition is provided in Section \ref{sec_prelim}. This framework includes the Hamming and Lee distances as special cases and enables a unified treatment of FCCs under these distances. Within this framework, we derive a general Plotkin-type bound on the optimal redundancy of FCCs by utilizing the partition of the message space induced by an arbitrary function.

The main contributions of this paper are summarized as follows:
\begin{itemize}
	\item We introduce a general Plotkin-type bound on the optimal redundancy of FCCs under the Wyner-Graham distance framework,  applicable to arbitrary functions. This bound depends on the cardinalities of the preimage sets and the corresponding sums of pairwise distances among vectors belonging to the same preimage set, thereby reducing the computational complexity of the existing Plotkin-type bounds. Since the Wyner-Graham distance framework includes the Hamming and Lee distances as special cases, specialized Plotkin-type bounds on the optimal redundancy of FCCs for these distances are obtained.
	\item We obtain simplified Plotkin-type bound for linear functions under the Wyner-Graham distance. We show that existing simplified bounds for linear functions under the Hamming and Lee distances are recovered as special cases of the proposed Plotkin-type bound.
	\item Using the proposed Plotkin-type bound, we derive simplified bounds for several important classes of functions, including the Hamming weight function, the Hamming weight distribution function, the monomial functions under the Hamming distance, the modular sum function, the Lee weight function, and the Lee weight distribution function  under the Lee distance. Unlike existing bounds specific for these functions, the proposed bounds are applicable in all parameter regimes. 
\end{itemize}

\subsection{Organization}
The rest of the paper is organized as follows. In Section \ref{sec_prelim}, we first define the Wyner-Graham distance and then generalize the concepts and definitions of FCCs introduced in \cite{LBWY} under the Hamming distance to the Wyner-Graham distance. In Section \ref{sec_Plotkin_WG}, we derive our general Plotkin-type bound for FCCs with Wyner-Graham distance. Also, the obtained Plotkin-type bound is specialized to the Hamming distance and the Lee distance.  The simplified Plotkin-type bound for linear functions under the Wyner-Graham distance is discussed in Section \ref{sec_Plotkin_linear}. In Section \ref{sec_Plotkin_Ham}, we obtain simplified bounds for specific functions under the Hamming distance including the Hamming weight function, the Hamming weight distribution function and the monomial functions. In Section \ref{sec_Plotkin_Lee}, the Plotkin-type bound is simplified to functions defined under the Lee distance such as the Lee weight function, the Lee weight distribution function and the modular sum function. Section \ref{sec_concl} concludes the paper.

\noindent \textit{Notations:}
Let $\mathbb{N}_0$ denote the set of non-negative integers and $\mathbb{N}_{0}^{M\times M}$ represents the set of all  $M\times M$ matrices with non-negative integer entries. For a matrix $\mathbf{D}\in\mathbb{N}_{0}^{M\times M}$, we denote its $(i,j)-th$  entry by $[\mathbf{D}]_{ij}$. For an integer $M$, we write $ [M]^+ \triangleq max\{M,0\}$ and for any positive integer $N$, $[N]$ denotes the set $\{1,2,...,N\}$. For any set $\mathcal{A}$, $|\mathcal{A}|$ denotes the cardinality of $\mathcal{A}$.  For sets $\mathcal{A}$ and $\mathcal{B}$, $\mathcal{A} \backslash \mathcal{B}$ denotes the elements in $\mathcal{A}$ but not in $\mathcal{B}$. For integers $a$ and $b$, $a \mid b$ reads as $a$ divides $b$.  For a prime power $q$,  $\mathbb{F}_q$ denotes the finite field with $q$ elements, $\mathbb{F}_q^{\ast}=\mathbb{F}_q \backslash \{0\}$ and $\mathbb{F}_q[x]$ denotes the ring of polynomials over $\mathbb{F}_q$.

\section{Preliminaries}\label{sec_prelim}
In this section, we first define the Wyner-Graham distance and state some of its properties. Then in the following subsection, we generalize the concepts and definitions of FCCs introduced in \cite{LBWY} under the Hamming distance to the Wyner-Graham distance.

\subsection{Wyner-Graham distance}
In \cite{WG}, Wyner and Graham obtained an upper bound on the maximun attainable minimum distance for a $q$-ary code for a certain class of distances. This class of distances includes the Hamming distance and the Lee distance as specific examples. In this paper, we refer to this class of distances as \textit{Wyner-Graham distance}. The formal definition is given below.
\begin{defn}[Wyner-Graham distance]\label{WG_dis_defn}
Let $\mathcal{X}$ be a finite alphabet set.	A distance $d_X: \mathcal{X} \times \mathcal{X} \to \mathbb{N}_0$ between two elements $x,y \in \mathcal{X}$, denoted by $d_X(x,y)$,  is called a Wyner-Graham distance if it satisfies the following conditions. 
	\begin{align*}
		&i. \text{ } d_X(x,x)=0,\\
		&ii. \text{ } d_X(x,y)=d_X(y,x),\\
		&iii. \text{ } \sum_{y \in \mathcal{X}} d_X(x,y)= S_X \text{ for all } x \in \mathcal{X}, \text{ where $S_X$ is some fixed number}, \\
		&iv.  \text{ For a fixed element } a\in \mathcal{X}, \text{ the array } \mathbf{Q}=(q_{x,y})_{x,y \in \mathcal{X} \backslash \{a\}}, \text{  is non-negative definite, } \\& \text{ where } q_{x,y}=d_X(a,x)+d_X(a,y)-d_X(x,y). 
	\end{align*}
	 Given two vectors $ \mathbf{x}=(x_1,x_2,...,x_n), \mathbf{y}=(y_1,y_2,...,y_n) \in \mathcal{X}^n$, the Wyner-Graham distance $d_X: \mathcal{X}^n \times \mathcal{X}^n \to \mathbb{N}_0$ is defined as $d_X(\mathbf{x},\mathbf{y})\triangleq \displaystyle \sum_{i=1}^{n} d_X(x_i,y_i)$. 	
\end{defn}
The Hamming distance and the Lee distance satisfy the conditions of the Wyner-Graham distance \cite{WG}; hence, they are special cases of Wyner-Graham distance. To establish the notations, we review the definitions of the Hamming distance and the Lee distance below.
\begin{defn}[Hamming distance]
	Let $\mathcal{X}$ be a finite alphabet set.  The Hamming distance $d_H$ is defined as
 $$
d_H(x,y)=
\begin{cases}
	0, & x=y,\\
	1, & x\neq y,
\end{cases}
\qquad \text{ for all } x,y \in \mathcal{X}.
$$
\end{defn}
Let $\mathbb{F}_q$ denote the finite field of order $q$. From the definition of Hamming distance, it can be seen that  $\displaystyle \sum_{y\in\mathbb{F}_q} d_H(x,y)=q-1,
 \forall x\in\mathbb{F}_q. $ That is, the parameter $S_X$ appearing in Definition \ref{WG_dis_defn} equals $S_H=q-1$. 
\begin{defn}[Lee Weight and Lee Distance \cite{CW}]
For $x\in\mathbb{Z}_q$, the Lee weight is defined as $w_{L}(x) = min(x, q - x)$. The Lee distance between two symbols $x,y \in \mathbb{Z}_{q}$ is defined as  $$d_{L}(x,y) = min(|x - y|, q - |x - y|).$$ 
\end{defn}
For every $x\in\mathbb{Z}_q$, $\displaystyle \sum_{y\in\mathbb{Z}_q} d_L(x,y)=S_L,$ is given in \cite{WG} as
\begin{equation}\label{eq:lee}
	S_L=
	\begin{cases}
		\frac{q^2}{4}, & q \text{ is even},\\
		\frac{q^2-1}{4}, & q \text{ is odd}.
	\end{cases}
\end{equation}
The following Lemma will be used in the proof of the general Plotkin-type bound for optimal redundancy of FCCs with Wyner-Graham distance in Section \ref{sec_Plotkin_WG}.
\begin{lem}\label{lem_WG}
	For $ \mathbf{x}=(x_1,x_2,...,x_n), \mathbf{y}=(y_1,y_2,...,y_n) \in \mathcal{X}^n$,  we have 
	\begin{equation}
		 \sum_{\mathbf{x}, \mathbf{y} \in \mathcal{X}^n} d_{X}(\mathbf{x}, \mathbf{y})=n|\mathcal{X}|^{2n-1}S_X.
	\end{equation}
\end{lem}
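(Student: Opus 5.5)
The plan is to expand the vector distance coordinatewise, using $d_X(\mathbf{x},\mathbf{y})=\sum_{i=1}^n d_X(x_i,y_i)$ from Definition \ref{WG_dis_defn}, and then exchange the order of summation:
\begin{equation*}
\sum_{\mathbf{x},\mathbf{y}\in\mathcal{X}^n} d_X(\mathbf{x},\mathbf{y})=\sum_{i=1}^{n}\sum_{\mathbf{x},\mathbf{y}\in\mathcal{X}^n} d_X(x_i,y_i).
\end{equation*}
For a fixed coordinate $i$, the summand depends only on the pair $(x_i,y_i)$. The remaining $n-1$ coordinates of $\mathbf{x}$ and the remaining $n-1$ coordinates of $\mathbf{y}$ range freely over $\mathcal{X}$. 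So each pair $(x_i,y_i)\in\mathcal{X}^2$ is counted exactly $|\mathcal{X}|^{2(n-1)}$ times. This gives
\begin{equation*}
\sum_{\mathbf{x},\mathbf{y}\in\mathcal{X}^n} d_X(x_i,y_i)=|\mathcal{X}|^{2n-2}\sum_{x\in\mathcal{X}}\sum_{y\in\mathcal{X}} d_X(x,y).
\end{equation*}

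Next I will apply condition iii of Definition \ref{WG_dis_defn}. For each $x$, the inner sum $\sum_{y\in\mathcal{X}}d_X(x,y)$ equals the constant $S_X$. Summing over the $|\mathcal{X}|$ choices of $x$ therefore gives $|\mathcal{X}|S_X$. Substituting back, each coordinate contributes $|\mathcal{X}|^{2n-1}S_X$, and summing over the $n$ coordinates yields $n|\mathcal{X}|^{2n-1}S_X$, as claimed.

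No step is a real obstacle. The only point needing care is the multiplicity count $|\mathcal{X}|^{2n-2}$ when collapsing the sum over full vectors to a sum over the single coordinate pair. Note also that only conditions i–iii are used. The nonnegative-definiteness condition iv is not needed here; presumably it enters later, in the Plotkin-type argument.
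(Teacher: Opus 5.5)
Your proof is correct and follows the same route as the paper. Both expand the distance coordinatewise, interchange the sums, and collapse each coordinate's contribution using the multiplicity $|\mathcal{X}|^{2n-2}$. Condition iii then gives $\sum_{x,y\in\mathcal{X}} d_X(x,y)=|\mathcal{X}|S_X$. (Strictly, only condition iii and the additive definition of $d_X$ on vectors are used; conditions i and ii play no role either.)
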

\begin{IEEEproof}
\begin{align}\label{eq:xy}
	\sum_{\mathbf{x}, \mathbf{y} \in \mathcal{X}^n} d_{X}(\mathbf{x}, \mathbf{y}) & = \sum_{\mathbf{x}, \mathbf{y} \in \mathcal{X}^n} \sum_{i=1}^{n} d_{X}(x_i, y_i) \nonumber \\ & =  \sum_{i=1}^{n} \sum_{\mathbf{x}, \mathbf{y} \in \mathcal{X}^k} d_{X}(x_i, y_i) \nonumber \\ & = \sum_{i=1}^{n} |\mathcal{X}|^{2n-2} \sum_{x,y \in \mathcal{X}} d_X(x,y)  \nonumber \\ & =\sum_{i=1}^{n} |\mathcal{X}|^{2n-2} |\mathcal{X}| \sum_{y \in \mathcal{X}} d_X(x,y)  \nonumber \\ & = \sum_{i=1}^{n} |\mathcal{X}|^{2n-2} |\mathcal{X}| S_X \nonumber \\ & = n|\mathcal{X}|^{2n-1}S_X \nonumber.
\end{align}
\end{IEEEproof}
 
Next, we generalize the concepts and definitions of FCCs introduced in \cite{LBWY} under the Hamming distance to the Wyner-Graham distance $d_X$. 
 \subsection{Function-correcting codes with Wyner-Graham distance $d_X$} 
Let $\mathbf{u} \in \mathcal{X}^k$ be a message and let $ f : \mathcal{X}^k \to Im(f) \triangleq \{ f(\mathbf{u}) : \mathbf{u} \in \mathcal{X}^k \} $ be a function computed on $ \mathbf{u} $ with expressiveness $ E = |Im(f)| \leq |\mathcal{X}|^k $. The message is encoded via an encoding function
$$
\mathrm{Enc} : \mathcal{X}^k \to \mathcal{X}^{k+r}, \quad \mathrm{Enc}(\mathbf{u}) = (\mathbf{u}, p(\mathbf{u})),
$$
where $p(\mathbf{u}) \in \mathcal{X}^r$ is the redundancy vector and $r$ is the redundancy. The resulting codeword $\mathrm{Enc}(\mathbf{u})$ is transmitted over a channel, resulting in a received vector $\mathbf{y} \in \mathcal{X}^{k+r}$ satisfying $ d_X(\mathrm{Enc}(\mathbf{u}), \mathbf{y}) \le t $.
The formal definition of function-correcting codes with the Wyner-Graham distance $d_X$ is given below.

\begin{defn}[Function-Correcting Codes with Wyner-Graham distance $d_X$]\label{FCC_defn}
	A systematic encoding function $\mathrm{Enc} : \mathcal{X}^k \to \mathcal{X}^{k+r}, \mathrm{Enc}(\mathbf{u})=(\mathbf{u},p(\mathbf{u}))$ defines a function-correcting code (FCC) for a function	$f:\mathcal{X}^k \to \operatorname{Im}(f)$ under the distance $d_X$ if for all $\mathbf{u}_1,\mathbf{u}_2 \in \mathcal{X}^k $ with $f(\mathbf{u}_1)\neq f(\mathbf{u}_2),$ we have $$d_X(\mathrm{Enc}(\mathbf{u}_1),\mathrm{Enc}(\mathbf{u}_2)) \ge 2t+1.$$
	
\end{defn}

By Definition \ref{FCC_defn}, for any received vector $\mathbf{y}$, which is obtained by at most $t$ errors from $\mathrm{Enc}(\mathbf{u})$, the receiver can uniquely recover $f(\mathbf{u})$, provided it has knowledge of the function $f$, the encoding function $\mathrm{Enc}$, and the distance $d_X$. Next, we define the term \textit{optimal redundancy} of an FCC under the Wyner-Graham distance $d_X$.

\begin{defn}[Optimal Redundancy]
	The optimal redundancy $ r_{X}^f(\mathcal{X}, k, t) $ is defined as the smallest integer $r$ for which there exists an FCC under the distance $d_X$ with an encoding function $\mathrm{Enc} : \mathcal{X}^k \to \mathcal{X}^{k+r}$ that enables recovery of $f(\mathbf{u})$ from a received vector $\mathbf{y} \in \mathcal{X}^{k+r}$ satisfying
	$d_X(\mathrm{Enc}(\mathbf{u}), \mathbf{y}) \le t $.
\end{defn}
Next, we define the \textit{distance requirement  matrix} associated with a function $f$ as follows.

\begin{defn}[Distance Requirement Matrix under distance $d_X$] 
	Let $\mathbf{u}_1, \dots, \mathbf{u}_M \in \mathcal{X}^{k}$. The distance requirement matrix (DRM) under the distance $d_X$ of a function $f$, $\mathbf{D}_f(t, \mathbf{u}_1, \dots, \mathbf{u}_M)$ , is defined as the $M \times M$ matrix with entries
	$$
	[\mathbf{D}_f(t, \mathbf{u}_1, \dots, \mathbf{u}_M)]_{ij}=
	\begin{cases}
		\begin{aligned}
			&[2t+1-d_{X}(\mathbf{u}_i, \mathbf{u}_j)]^+, \text{if } f(\mathbf{u}_i) \neq f(\mathbf{u}_j), \\
			&0, \quad \text{otherwise}.
		\end{aligned}
	\end{cases}
	$$ 
	
\end{defn}
A \textit{code} $\mathcal{C}$ of length $r$ and size $M$  is a set of $M$ vectors, i.e., $\mathcal{C} =\{\mathbf{u}_i=(u_{i_1},u_{i_2},...,u_{i_r}) : u_{i_j} \in \mathcal{X}, j \in [r], i \in [M] \} $.  Each vector $\mathbf{u}_i\in\mathcal{C}$ is called a \textit{codeword}. Given a matrix $\mathbf{D} \in \mathbb{N}_0^{M \times M}$, we define a $\mathbf{D}_X$-code as follows.  
\begin{defn}[$\mathbf{D}_X$-code]
	\label{Dcode}
	Let $\mathbf{D} \in \mathbb{N}_0^{M \times M}$ and $\mathcal{P} = \{ \mathbf{p}_1, \mathbf{p}_2, \ldots, \mathbf{p}_M \} \subseteq \mathcal{X}^r$ be a code of length $r$ and size $M$. Then, $\mathcal{P} = \{ \mathbf{p}_1, \mathbf{p}_2, \ldots, \mathbf{p}_M \}$ is a $\mathbf{D}_X$-code under the Wyner-Graham distance $d_X$, if there exists an ordering of the codewords of $\mathcal{P}$ such that $d_{X}(\mathbf{p}_i, \mathbf{p}_j) \geq [\mathbf{D}]_{ij}$ for all $i, j \in [M]$.
\end{defn}
The following example illustrates the defintions of DRM and $\mathbf{D}_X$-code.
\begin{example}
Consider the message space $\mathbb{F}_2^3$ equipped with the Hamming distance. Let $f:\mathbb{F}_2^3 \rightarrow {0,1,2}$ be defined by $f(000)=0, f(111)=0, f(011)=1,  f(101)=1, f(001)=2, f(010)=2,  f(100)=2,  f(110)=2.$ The DRM corresponding to this function for $t=1$ and the inputs ordered lexicographically is given by
	\[
	\mathbf{D}_f(1,000,...,111)=
	\begin{pmatrix}
		0&2&2&1&2&1&1&0\\
		2&0&0&2&0&2&0&1\\
		2&0&0&2&0&0&0&1\\
		1&2&2&0&0&0&1&2\\
		2&0&0&0&0&2&0&1\\
		1&2&0&0&2&0&1&2\\
		1&0&0&1&0&1&0&2\\
		0&1&1&2&1&2&2&0
	\end{pmatrix}.
	\]
	
	Consider $\mathcal{P}=\{\mathbf{p_1},\mathbf{p_2},\ldots,\mathbf{p_8}\}$, where 	
	\[
	\begin{aligned}
		\mathbf p_1&=0000, &
		\mathbf p_2&=0011, &
		\mathbf p_3&=0101, &
		\mathbf p_4&=0110, \\
		\mathbf p_5&=0111, &
		\mathbf p_6&=0100, &
		\mathbf p_7&=0010, &
		\mathbf p_8&=0001.
	\end{aligned}
	\] 
	It can be verified that $d_H(\mathbf{p}_i,\mathbf{p}_j)\geq [\mathbf{D}_f(1,000,...,111)]_{ij},
	 \forall i,j\in[8]$. Hence, $\mathcal{P}$ is a $\mathbf{D}_H$-code corresponding to the DRM $\mathbf{D}_f(1,000,...,111)$.
\end{example}
 The following lemma will be used in the derivation of the general Plotkin-type bound in Section \ref{sec_Plotkin_WG}.
\begin{lem}\cite{WG}\label{lem:Plotkin}
	Let $\mathcal{P} = \{ \mathbf{p}_1, \mathbf{p}_2, \ldots, \mathbf{p}_M \} \subseteq \mathcal{X}^r$ be a code of length $r$ and size $M$ with respect to the Wyner-Graham distance $d_X$. Then, we have
	\begin{equation}\label{eq:Plotkin}
		\displaystyle \sum_{ i,j \in [M] : i <j} d_X(\mathbf{p}_i,\mathbf{p}_j) \le \frac{S_XM^2r}{2|\mathcal{X}|}.
	\end{equation} 
\end{lem}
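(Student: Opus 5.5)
The plan is to reduce the statement to a single coordinate and then sum over the $r$ coordinates. Since $d_X(\mathbf{p}_i,\mathbf{p}_j)=\sum_{\ell=1}^{r} d_X(p_{i\ell},p_{j\ell})$, it suffices to prove that for any multiset of $M$ symbols $x_1,\dots,x_M\in\mathcal{X}$ (one column of the code),
\[
\sum_{i<j} d_X(x_i,x_j)\le \frac{S_X M^2}{2|\mathcal{X}|},
\]
and then add these $r$ inequalities. Writing $n_x$ for the number of indices $i$ with $x_i=x$, so that $\sum_x n_x=M$, the left side equals $\tfrac12\mathbf{n}^T\mathbf{D}\mathbf{n}$, where $\mathbf{D}=(d_X(x,y))_{x,y\in\mathcal{X}}$. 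This uses $d_X(x,x)=0$ (condition i) and symmetry (condition ii). The claim therefore becomes the quadratic-form inequality $\mathbf{n}^T\mathbf{D}\mathbf{n}\le S_X M^2/|\mathcal{X}|$ for every nonnegative $\mathbf{n}$ with $\mathbf{1}^T\mathbf{n}=M$. In fact I would prove it for all real $\mathbf{n}$ with that sum.

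To prove the quadratic inequality, I would write $\mathbf{n}=\frac{M}{|\mathcal{X}|}\mathbf{1}+\mathbf{z}$ with $\mathbf{1}^T\mathbf{z}=0$. Condition iii says $\mathbf{D}\mathbf{1}=S_X\mathbf{1}$, so the cross terms vanish: $\mathbf{1}^T\mathbf{D}\mathbf{z}=S_X\mathbf{1}^T\mathbf{z}=0$. The uniform part gives exactly $\frac{M^2}{|\mathcal{X}|^2}\,\mathbf{1}^T\mathbf{D}\mathbf{1}=\frac{M^2}{|\mathcal{X}|^2}|\mathcal{X}|S_X=\frac{S_XM^2}{|\mathcal{X}|}$. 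It then remains to show $\mathbf{z}^T\mathbf{D}\mathbf{z}\le 0$ whenever $\mathbf{1}^T\mathbf{z}=0$, i.e., that $\mathbf{D}$ is conditionally negative semidefinite.

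This last step is where condition iv enters, and I expect it to be the main obstacle. Given $\mathbf{z}$ with zero sum, let $\mathbf{w}$ be its restriction to $\mathcal{X}\setminus\{a\}$, so $z_a=-\sum_{x\neq a}w_x$. I would expand
\[
\mathbf{z}^T\mathbf{D}\mathbf{z}=\sum_{x,y\ne a} w_xw_y d_X(x,y)+2z_a\sum_{x\ne a}w_x d_X(a,x).
\]
Substituting $z_a$ turns the second term into $-\sum_{x,y\neq a}w_xw_y\big(d_X(a,x)+d_X(a,y)\big)$. Hence $\mathbf{z}^T\mathbf{D}\mathbf{z}=-\mathbf{w}^T\mathbf{Q}\mathbf{w}\le 0$ by the non-negative definiteness of $\mathbf{Q}$. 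I would check this algebra carefully, especially the factor of $2$ and the symmetrization of the term $\sum_x w_x d_X(a,x)\cdot\sum_y w_y$. That completes the single-coordinate bound, and summing over the $r$ coordinates gives (\ref{eq:Plotkin}).
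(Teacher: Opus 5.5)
The paper gives no proof of this lemma; it quotes it from Wyner and Graham. There is therefore no internal argument to compare yours against. Your proposal is correct and stands as a complete, self-contained proof.

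The individual steps all check out:
\begin{itemize}
\item The reduction to a single column uses the additivity of $d_X$.
\item The identity $\sum_{i<j} d_X(x_i,x_j)=\tfrac12\mathbf{n}^T\mathbf{D}\mathbf{n}$ uses conditions i and ii.
\item The cross term vanishes because $\mathbf{D}\mathbf{1}=S_X\mathbf{1}$ (condition iii) together with symmetry.
\item The uniform part evaluates to $S_XM^2/|\mathcal{X}|$ since $\mathbf{1}^T\mathbf{D}\mathbf{1}=|\mathcal{X}|S_X$.
\end{itemize}

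The algebra you flagged is also right. With $z_a=-\sum_{y\neq a}w_y$ and $d_X(a,a)=0$,
\[
2z_a\sum_{x\neq a}w_x\, d_X(a,x) = -2\sum_{x,y\neq a}w_xw_y\,d_X(a,x) = -\sum_{x,y\neq a}w_xw_y\bigl(d_X(a,x)+d_X(a,y)\bigr),
\]
so $\mathbf{z}^T\mathbf{D}\mathbf{z}=-\mathbf{w}^T\mathbf{Q}\mathbf{w}\le 0$ by condition iv.

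Conceptually, condition iv makes the column quadratic form concave on the hyperplane $\mathbf{1}^T\mathbf{n}=M$. Condition iii makes it stationary at the uniform count vector. Hence it is maximized there, which is the mechanism behind the Wyner--Graham Plotkin-type bound. As you note, the inequality then holds for all real $\mathbf{n}$ with $\mathbf{1}^T\mathbf{n}=M$, not just nonnegative integer count vectors.
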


\section{General Plotkin-type Bound for FCCs with Wyner-Graham distance }\label{sec_Plotkin_WG}
In this section, we obtain a general Plotkin-type bound on the optimal redundancy of FCCs with the Wyner-Graham distance,  applicable to arbitrary functions. Consider the message space $\mathcal{X}^k = \{\mathbf{u}_i : i \in [|\mathcal{X}|^k] \}$. Any function $f : \mathcal{X}^k \to Im(f)$ partitions the message space $\mathcal{X}^k$ into disjoint preimage sets corresponding to the distinct values of $f$.  The following theorem establishes a general Plotkin-type bound on the optimal redundancy of FCCs with the Wyner-Graham $d_X$, parameterized by the resulting partition of the message space, specifically through the sizes of the partition classes and the pairwise distances among vectors within each class. As a result, it offers a significantly lower computational complexity compared to existing Plotkin-type bounds.
 \begin{thm}\label{thm:general_plotkin}
 	Let $f:\mathcal{X}^k \to \operatorname{Im}(f)$ be a function with $Im(f)=\{\alpha_1,\alpha_2,...,\alpha_E\}$. For each $ i \in [E]$, let $\mathcal{U}_{\alpha_{i}} \triangleq \{\mathbf{u} \in \mathcal{X}^k : f(\mathbf{u})=\alpha_i \}$ and $n_i = |\mathcal{U}_{\alpha_{i}}|$. Then, the optimal redundancy $ r_{X}^f(\mathcal{X}, k, t)$ of an FCC for $f$ satisfies 
 	\begin{equation}
 		r_{X}^f(\mathcal{X}, k, t) \ge \frac{(2t+1)\left(|\mathcal{X}|^{2k}-\displaystyle \sum_{i=1}^{E} {n_i}^2 \right)+ \displaystyle \sum_{i=1}^{E} \sum_{\mathbf{u},\mathbf{v} \in \mathcal{U}_{\alpha_{i}} } d_{X}(\mathbf{u}, \mathbf{v})}{S_X|\mathcal{X}|^{2k-1}} - k.
 	\end{equation}
 \end{thm}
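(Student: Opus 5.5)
Our plan is to apply Lemma \ref{lem:Plotkin} to the full encoded code $\{\mathrm{Enc}(\mathbf{u}) : \mathbf{u}\in\mathcal{X}^k\}$ of length $k+r$ and size $M=|\mathcal{X}|^k$, and to lower bound the sum of pairwise distances using the FCC property. Let $r=r_X^f(\mathcal{X},k,t)$ and fix an optimal FCC. Lemma \ref{lem:Plotkin}, with the sum taken over ordered pairs (doubling both sides), gives
\begin{equation*}
\sum_{\mathbf{u},\mathbf{v}\in\mathcal{X}^k} d_X(\mathrm{Enc}(\mathbf{u}),\mathrm{Enc}(\mathbf{v})) \le \frac{S_X |\mathcal{X}|^{2k}(k+r)}{|\mathcal{X}|} = S_X|\mathcal{X}|^{2k-1}(k+r).
\end{equation*}
Rearranged for $r$, this produces exactly the denominator and the $-k$ term in the statement. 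It therefore remains to show that the left side is at least the numerator.

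We split the ordered pairs $(\mathbf{u},\mathbf{v})$ according to whether $f(\mathbf{u})\neq f(\mathbf{v})$ or both lie in the same preimage set $\mathcal{U}_{\alpha_i}$. There are $\sum_{i} n_i^2$ pairs of the second kind, since $\sum_i n_i=|\mathcal{X}|^k$. Hence there are $|\mathcal{X}|^{2k}-\sum_i n_i^2$ ordered pairs with distinct function values. For each such pair, Definition \ref{FCC_defn} gives $d_X(\mathrm{Enc}(\mathbf{u}),\mathrm{Enc}(\mathbf{v}))\ge 2t+1$. For pairs inside the same $\mathcal{U}_{\alpha_i}$, additivity of $d_X$ over coordinates and systematic encoding give
\begin{equation*}
d_X(\mathrm{Enc}(\mathbf{u}),\mathrm{Enc}(\mathbf{v})) = d_X(\mathbf{u},\mathbf{v})+d_X(p(\mathbf{u}),p(\mathbf{v})) \ge d_X(\mathbf{u},\mathbf{v}),
\end{equation*}
where the last step uses nonnegativity of $d_X$. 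Summing the two contributions yields the numerator
\begin{equation*}
(2t+1)\Big(|\mathcal{X}|^{2k}-\sum_i n_i^2\Big)+\sum_i\sum_{\mathbf{u},\mathbf{v}\in\mathcal{U}_{\alpha_i}} d_X(\mathbf{u},\mathbf{v}),
\end{equation*}
and dividing by $S_X|\mathcal{X}|^{2k-1}$ then gives the bound.

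The main obstacle is bookkeeping rather than conceptual. The first point is consistency between ordered pairs, which the theorem's double sums use, and the unordered pairs $i<j$ of Lemma \ref{lem:Plotkin}; the factor $2$ must cancel correctly to give $S_X|\mathcal{X}|^{2k-1}$. The second point is that Lemma \ref{lem:Plotkin} must hold for codes of length $k+r$ with the full codewords, which is immediate since it applies to any code in $\mathcal{X}^{n}$.

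The slack in the bound deserves a remark. A tighter argument would apply the Plotkin lemma only to the redundancy parts $p(\mathbf{u})$, which form a $\mathbf{D}_X$-code for the DRM, and would lower bound their pairwise distances by $2t+1-d_X(\mathbf{u},\mathbf{v})$ across classes. Combining this with Lemma \ref{lem_WG} to evaluate $\sum d_X(\mathbf{u},\mathbf{v})$ over all pairs as $k|\mathcal{X}|^{2k-1}S_X$ recovers the same expression, including the $-k$ term. That derivation serves as a consistency check on the route above.
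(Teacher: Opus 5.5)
Your argument is correct and gives exactly the stated bound, by a slightly different route from the paper's.

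\textbf{What the paper does.} It applies Lemma \ref{lem:Plotkin} only to the length-$r$ redundancy vectors $p(\mathbf{u})$, viewed as a $\mathbf{D}_X$-code for the DRM. It relaxes $[2t+1-d_X(\mathbf{u},\mathbf{v})]^+$ to $2t+1-d_X(\mathbf{u},\mathbf{v})$. It then needs Lemma \ref{lem_WG} to rewrite the cross-class sum of $d_X(\mathbf{u},\mathbf{v})$ as $k|\mathcal{X}|^{2k-1}S_X$ minus the intra-class sums, and that is where its $-k$ comes from.

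\textbf{What you do differently.} You apply the lemma to the full systematic codewords of length $k+r$ and invoke Definition \ref{FCC_defn} directly, so the $-k$ comes from the codeword length. You need neither the DRM nor Lemma \ref{lem_WG}, nor the step (implicit in the paper) that the $p(\mathbf{u})$ form a $\mathbf{D}_X$-code. You also apply the lemma to genuinely distinct words. The $p(\mathbf{u})$ may repeat, so the paper tacitly uses the lemma for multisets; this is harmless, since the Wyner--Graham averaging works coordinate by coordinate.

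\textbf{Why the two agree exactly.} Your per-pair bounds are $d_X(\mathbf{u},\mathbf{v})$ plus the relaxed DRM entries. By Lemma \ref{lem_WG}, the message coordinates meet the Plotkin bound with equality, so they add the same $k|\mathcal{X}|^{2k-1}S_X$ to both sides. For this reason your closing remark should not call the paper's route \emph{tighter}. In both versions the only losses are:
\begin{itemize}
\item the Plotkin inequality itself;
\item replacing $\max\{2t+1,d_X(\mathbf{u},\mathbf{v})\}$ by $2t+1$ across classes;
\item discarding $d_X(p(\mathbf{u}),p(\mathbf{v}))$ within classes.
\end{itemize}

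\textbf{What each buys.} The paper's DRM formulation gives a direct link to the DRM-based bounds of Lemmas \ref{lem:Lenz} and \ref{lem:Lee}. It shows the new bound is the DRM Plotkin sum with $[\cdot]^+$ dropped, making visible where it can be weaker. Your version buys brevity and self-containedness.
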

 \begin{IEEEproof}
 	Consider the message set $\mathcal{X}^k = \{\mathbf{u}_i : i \in [|\mathcal{X}|^k] \}$. Let $\mathbf{D}_f(t, \mathbf{u}_1, \dots, \mathbf{u}_{|\mathcal{X}|^k})$ be the DRM under the Wyner-Graham distance $d_X$ of a function $f:\mathcal{X}^k \to \operatorname{Im}(f)$. For notational convenience, we denote this DRM by $\mathbf{D}$ throughout this proof. Let $\mathcal{P} = \{ \mathbf{p}_1, \mathbf{p}_2, \ldots, \mathbf{p}_{|\mathcal{X}|^k} \} \subseteq \mathcal{X}^r$ be a $\mathbf{D}_X$-code corresponding to the DRM $\mathbf{D}$. From Definition \ref{Dcode}, it follows that $$ d_{X}(\mathbf{p}_i, \mathbf{p}_j) \geq [\mathbf{D}]_{ij}, \text{ for all } i, j \in [|\mathcal{X}|^k]. $$  Therefore, we have 
 	\begin{equation}\label{eq:sum}
 		\sum_{i,j \in [|\mathcal{X}|^k] : i <j} [\mathbf{D}]_{ij} \le \sum_{i,j \in [|\mathcal{X}|^k] : i <j} d_X(\mathbf{p}_i, \mathbf{p}_j).
 	\end{equation}
 	Combining (\ref{eq:Plotkin}) in Lemma \ref{lem:Plotkin} and (\ref{eq:sum}), we obtain
 	\begin{equation}\label{eq:Plotkin1}
 		\sum_{i,j \in [|\mathcal{X}|^k] : i <j} [\mathbf{D}]_{ij} \le \frac{S_X|\mathcal{X}|^{2k}r}{2|\mathcal{X}|} = \frac{S_X|\mathcal{X}|^{2k-1}r}{2}. 
 	\end{equation}
 	
 	Next, we find a lower bound on $\displaystyle \sum_{i,j \in [|\mathcal{X}|^k] : i <j} [\mathbf{D}]_{ij}$. Let $Im(f)=\{\alpha_1,\alpha_2,...,\alpha_E\}$. We define the sets $\mathcal{U}_{\alpha_{i}} \triangleq \{\mathbf{u} \in \mathcal{X}^k : f(\mathbf{u})=\alpha_i \}$ and $n_i = |\mathcal{U}_{\alpha_{i}}|$, $\forall i \in [E]$. Clearly, $\displaystyle \sum_{i=1}^{E} n_i = |\mathcal{X}|^k$. Therefore, $\displaystyle \sum_{i,j \in [|\mathcal{X}|^k] : i <j} [\mathbf{D}]_{ij}$ can be written as 
 	$$ \sum_{i,j \in [|\mathcal{X}|^k] : i <j} [\mathbf{D}]_{ij} = \frac{1}{2} \sum_{i=1}^{E} \sum_{\mathbf{u} \in \mathcal{U}_{\alpha_{i}} } \sum_{\mathbf{v} \in \mathcal{X}^k } [\mathbf{D}(\mathbf{u}, \mathbf{v})],$$ where $[\mathbf{D}(\mathbf{u}, \mathbf{v})]$ denotes the entry of $\mathbf{D}$ corresponding to row $\mathbf{u}$ and column $\mathbf{v}$. This expression can be further simplified using the definition of DRM as follows.
 
 	\begin{align} \label{eq:ub}
 		\sum_{i,j \in [|\mathcal{X}|^k] : i <j} [\mathbf{D}]_{ij} & = \frac{1}{2} \sum_{i=1}^{E} \sum_{\mathbf{u} \in \mathcal{U}_{\alpha_{i}} } \sum_{\mathbf{v} \in \mathcal{X}^k } [D(\mathbf{u}, \mathbf{v})] \nonumber \\ & = \frac{1}{2} \sum_{i=1}^{E} \sum_{\mathbf{u} \in \mathcal{U}_{\alpha_{i}} } \sum_{\mathbf{v} \in \mathcal{X}^k \backslash \mathcal{U}_{\alpha_{i}} } [2t+1-d_{X}(\mathbf{u}, \mathbf{v})]^+ \nonumber \\ & \ge \frac{1}{2} \sum_{i=1}^{E} \sum_{\mathbf{u} \in \mathcal{U}_{\alpha_{i}} } \sum_{\mathbf{v} \in \mathcal{X}^k \backslash \mathcal{U}_{\alpha_{i}} } ( 2t+1-d_{X}(\mathbf{u}, \mathbf{v}) ) \nonumber \\ & = \frac{2t+1}{2} \sum_{i=1}^{E} \left(\sum_{\mathbf{u} \in \mathcal{U}_{\alpha_{i}} } \sum_{\mathbf{v} \in \mathcal{X}^k \backslash \mathcal{U}_{\alpha_{i}} } 1 \right) - \frac{1}{2} \sum_{i=1}^{E} \sum_{\mathbf{u} \in \mathcal{U}_{\alpha_{i}} } \sum_{\mathbf{v} \in \mathcal{X}^k \backslash \mathcal{U}_{\alpha_{i}} } d_{X}(\mathbf{u}, \mathbf{v}).  
 	\end{align}

 	Since $\displaystyle \sum_{\mathbf{u} \in \mathcal{U}_{\alpha_{i}} } \sum_{\mathbf{v} \in \mathcal{X}^k \backslash \mathcal{U}_{\alpha_{i}} } 1 = n_i (|\mathcal{X}|^k-n_i)$ and $\displaystyle \sum_{i=1}^{E} n_i =|\mathcal{X}|^k$, (\ref{eq:ub}) can be written as 
 	\begin{align}\label{eq:inter_class}
 		\sum_{i,j \in [|\mathcal{X}|^k] : i <j} [\mathbf{D}]_{ij} & \ge \frac{2t+1}{2} \sum_{i=1}^{E} \left(n_i (|\mathcal{X}|^k-n_i) \right) - \frac{1}{2} \sum_{i=1}^{E} \sum_{\mathbf{u} \in \mathcal{U}_{\alpha_{i}} } \sum_{\mathbf{v} \in \mathcal{X}^k \backslash \mathcal{U}_{\alpha_{i}} } d_{X}(\mathbf{u}, \mathbf{v}) \nonumber \\ & = \frac{2t+1}{2}  \left(|\mathcal{X}|^{2k}-\sum_{i=1}^{E} {n_i}^2 \right) - \frac{1}{2} \sum_{i=1}^{E} \sum_{\mathbf{u} \in \mathcal{U}_{\alpha_{i}} } \sum_{\mathbf{v} \in \mathcal{X}^k \backslash \mathcal{U}_{\alpha_{i}} } d_{X}(\mathbf{u}, \mathbf{v}) \nonumber \\ & = \frac{2t+1}{2}  \left(|\mathcal{X}|^{2k}-\sum_{i=1}^{E} {n_i}^2 \right) - \frac{1}{2}  \left(\sum_{\mathbf{u}, \mathbf{v} \in \mathcal{X}^k} d_{X}(\mathbf{u}, \mathbf{v}) - \sum_{i=1}^{E} \sum_{\mathbf{u},\mathbf{v} \in \mathcal{U}_{\alpha_{i}} } d_{X}(\mathbf{u}, \mathbf{v}) \right).
 	\end{align}
 	From Lemma \ref{lem_WG}, we have  
 	\begin{equation}\label{eq:xy}
 		\sum_{\mathbf{u}, \mathbf{v} \in \mathcal{X}^k} d_{X}(\mathbf{u}, \mathbf{v}) = k|\mathcal{X}|^{2k-1}S_X.
 	\end{equation}
 	Substituting (\ref{eq:xy}) in (\ref{eq:inter_class}), we obtain
 	\begin{equation}\label{eq:Plotkin2}
 	\sum_{i,j \in [|\mathcal{X}|^k] : i <j} [\mathbf{D}]_{ij} \ge \frac{2t+1}{2}  \left(|\mathcal{X}|^{2k}-\sum_{i=1}^{E} {n_i}^2 \right) - \frac{k|\mathcal{X}|^{2k-1}S_X}{2}   + \frac{1}{2} \sum_{i=1}^{E} \sum_{\mathbf{u},\mathbf{v} \in \mathcal{U}_{\alpha_{i}} } d_{X}(\mathbf{u}, \mathbf{v}). 
 	\end{equation}
 	Therefore, from (\ref{eq:Plotkin1}) and (\ref{eq:Plotkin2}), we get 
 	\begin{equation}\label{eq:Plotkin3}
 	\frac{2t+1}{2}  \left(|\mathcal{X}|^{2k}-\sum_{i=1}^{E} {n_i}^2 \right) - \frac{k|\mathcal{X}|^{2k-1}S_X}{2}   + \frac{1}{2} \sum_{i=1}^{E} \sum_{\mathbf{u},\mathbf{v} \in \mathcal{U}_{\alpha_{i}} } d_{X}(\mathbf{u}, \mathbf{v}) \le \frac{S_X|\mathcal{X}|^{2k-1}r}{2}.
 	\end{equation}
 	Rearranging (\ref{eq:Plotkin3}), we obtain
 	\begin{equation}\label{eq:Plotkin4}
 		r \ge \frac{(2t+1)\left(|\mathcal{X}|^{2k}-\displaystyle \sum_{i=1}^{E} {n_i}^2 \right)+ \displaystyle \sum_{i=1}^{E} \sum_{\mathbf{u},\mathbf{v} \in \mathcal{U}_{\alpha_{i}} } d_{X}(\mathbf{u}, \mathbf{v})}{S_X|\mathcal{X}|^{2k-1}} - k.
 	\end{equation}
 \end{IEEEproof}

 As stated before, the Wyner-Graham distance includes the Hamming distance and the Lee distance. Next, we specialize the obtained bound in Theorem \ref{thm:general_plotkin} to the Hammig distance.  Let $\mathcal{X}=\mathbb{F}_q$. Then $|\mathcal{X}|=q$ and for Hamming distance, we have $S_H=q-1$. Substituting these quantities into the general Plotkin-type bound yields the following corollary for FCCs with the Hamming distance. The optimal redundancy $ r_{X}^f(\mathcal{X}, k, t) $ gets  denoted by $ r_{H}^f(q, k, t) $. 
 \begin{cor}\label{cor:Hamming_plotkin}
 	Let $f:\mathbb{F}_q^k \to \operatorname{Im}(f)$ be a function with $Im(f)=\{\alpha_1,\alpha_2,...,\alpha_E\}$. For each $ i \in [E]$, let $\mathcal{U}_{\alpha_{i}} \triangleq \{\mathbf{u} \in \mathbb{F}_q^k : f(\mathbf{u})=\alpha_i \}$ and $n_i = |\mathcal{U}_{\alpha_{i}}|$. Then, the optimal redundancy $ r_{H}^f(q, k, t)$ of an FCC for $f$ satisfies 
 	\begin{equation}\label{eq:Hamming_plotkin}
 		r_{H}^f(q, k, t) \ge \frac{(2t+1)\left(q^{2k}-\displaystyle \sum_{i=1}^{E} {n_i}^2 \right)+ \displaystyle \sum_{i=1}^{E} \sum_{\mathbf{u},\mathbf{v} \in \mathcal{U}_{\alpha_{i}} } d_{H}(\mathbf{u}, \mathbf{v})}{(q-1)q^{2k-1}} - k.
 	\end{equation}
 \end{cor}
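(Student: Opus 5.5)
This is a direct specialization of Theorem \ref{thm:general_plotkin}, so the plan is to instantiate that theorem rather than redo the Plotkin argument. First I will check that the Hamming distance on $\mathcal{X}=\mathbb{F}_q$ is a Wyner-Graham distance in the sense of Definition \ref{WG_dis_defn}; the paper already cites \cite{WG} for this. Conditions (i) and (ii) are immediate. Condition (iii) holds because every row of the $q\times q$ distance table has exactly $q-1$ ones, giving $S_H=q-1$.

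For condition (iv), fix $a$. Then $q_{x,y}=d_H(a,x)+d_H(a,y)-d_H(x,y)$ equals $1$ when $x\neq y$ and $2$ when $x=y$, for $x,y\neq a$. So $\mathbf{Q}=\mathbf{I}+\mathbf{J}$, which is positive definite. Hence the extension of $d_H$ to $\mathbb{F}_q^n$ coordinatewise is exactly the Wyner-Graham extension, and Lemma \ref{lem_WG} and Lemma \ref{lem:Plotkin} both apply.

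Next I will substitute $|\mathcal{X}|=q$ and $S_X=S_H=q-1$ into the bound of Theorem \ref{thm:general_plotkin}. The numerator becomes $(2t+1)(q^{2k}-\sum_i n_i^2)+\sum_i\sum_{\mathbf{u},\mathbf{v}\in\mathcal{U}_{\alpha_i}}d_H(\mathbf{u},\mathbf{v})$, and the denominator becomes $(q-1)q^{2k-1}$. This gives \eqref{eq:Hamming_plotkin}, where $r_H^f(q,k,t)$ is simply $r_X^f(\mathbb{F}_q,k,t)$ with $d_X=d_H$.

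There is no genuine obstacle here. The only point requiring care is confirming condition (iv), since Lemma \ref{lem:Plotkin} relies on it; the identity $\mathbf{Q}=\mathbf{I}+\mathbf{J}$ settles that.
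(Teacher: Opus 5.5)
Your proposal is correct and follows the paper's proof: the corollary is obtained by substituting $|\mathcal{X}|=q$ and $S_H=q-1$ into Theorem~\ref{thm:general_plotkin}. The only difference is that you explicitly check that the Hamming distance is a Wyner--Graham distance, including condition (iv) via $\mathbf{Q}=\mathbf{I}+\mathbf{J}$. The paper does not check this and instead cites Wyner and Graham for it.
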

 The known Plotkin-type bound on the optimal redundancy of FCCs with the Hamming distance, applicable to arbitrary function, is stated in the following lemma.
 \begin{lem}\cite{LBWY}\label{lem:Lenz}
 	Let $\mathbf{D}_f(t, \mathbf{u}_1, \dots, \mathbf{u}_{q^k})$ denote the DRM associated with a function $f$ under the Hamming distance. Then 
 	\begin{equation}
 		r_{H}^f(q, k, t) \geq \begin{cases}
 			\frac{4}{q^{2k}} \displaystyle \sum_{i,j:i<j}[\mathbf{D}_f(t, \mathbf{u}_1, \dots, \mathbf{u}_{q^k})]_{ij},  \text{ if } q \text{ is even}, \\ \frac{4}{q^{2k}-1} \displaystyle \sum_{i,j:i<j}[\mathbf{D}_f(t, \mathbf{u}_1, \dots, \mathbf{u}_{q^k})]_{ij},  \text{ if } q \text{ is odd.}
 		\end{cases} 
 	\end{equation}
 \end{lem}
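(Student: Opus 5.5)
The plan is to reproduce the argument of \cite{LBWY}: reduce the statement to a counting bound on the redundancy vectors, and then apply a column-by-column Plotkin estimate to the code they form. Write $M=q^k$ and let $\mathrm{Enc}(\mathbf{u})=(\mathbf{u},p(\mathbf{u}))$ be an optimal FCC with redundancy $r=r_H^f(q,k,t)$.

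The first step shows that $\mathcal{P}=\{p(\mathbf{u}_1),\dots,p(\mathbf{u}_M)\}$, listed in the order of the messages, is a $\mathbf{D}_H$-code for $\mathbf{D}=\mathbf{D}_f(t,\mathbf{u}_1,\dots,\mathbf{u}_M)$. Take $f(\mathbf{u}_i)\neq f(\mathbf{u}_j)$. By Definition \ref{FCC_defn},
\[
d_H(\mathbf{u}_i,\mathbf{u}_j)+d_H(p(\mathbf{u}_i),p(\mathbf{u}_j))\ge 2t+1 .
\]
Since distances are non-negative, this gives $d_H(p(\mathbf{u}_i),p(\mathbf{u}_j))\ge[\mathbf{D}]_{ij}$. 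When $f(\mathbf{u}_i)=f(\mathbf{u}_j)$ the entry is $0$ and the inequality is trivial. Summing over pairs then gives
\[
\sum_{i<j}[\mathbf{D}]_{ij}\le\sum_{i<j}d_H(\mathbf{p}_i,\mathbf{p}_j).
\]

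The second step upper-bounds the right-hand side, following the classical Plotkin argument. Exchange the order of summation so that the sum runs over the $r$ coordinates. In one coordinate, let $m_a$ be the number of codewords whose entry in that coordinate equals $a$. The number of pairs that differ there is $\tfrac12\bigl(M^2-\sum_a m_a^2\bigr)$.

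In the binary case this count equals $m_0(M-m_0)$. It is maximised at $m_0=M/2$, giving $M^2/4$, when $M$ is even, and at $m_0=(M\pm1)/2$, giving $(M^2-1)/4$, when $M$ is odd. The "$q$ even/odd" cases in the statement correspond exactly to the parity of $M=q^k$, so the parity argument should be phrased in terms of $M$. Multiplying by $r$ gives $\sum_{i<j}[\mathbf{D}]_{ij}\le rM^2/4$ or $r(M^2-1)/4$, respectively. Rearranging yields the two displayed bounds.

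The main obstacle is this per-coordinate maximisation when $q>2$. There the pair count $\tfrac12\bigl(M^2-\sum_a m_a^2\bigr)$ is maximised by the balanced split $m_a=M/q$, which gives $M^2(q-1)/(2q)$, and this exceeds $M^2/4$. So the constant $4$ in the statement is justified by this argument only for $q=2$, the setting of \cite{LBWY}. For general $q$ the same argument gives the weaker constant $2q/(q-1)$, consistent with Lemma \ref{lem:Plotkin} with $S_H=q-1$.

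I would therefore prove the lemma as stated for $q=2$, with the even/odd split referring to $M$. For $q>2$ I would record the analogous bound $r\ge \frac{2q}{(q-1)M^2}\sum_{i<j}[\mathbf{D}]_{ij}$ obtained from the balanced-split maximum. I do not see how to derive the general-$q$ statement with constant $4$ from this argument.
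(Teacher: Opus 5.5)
Your proposal is correct and follows the argument of \cite{LBWY}. The paper only cites this lemma, but its proof of Theorem~\ref{thm:general_plotkin} uses the same two steps: the reduction to a $\mathbf{D}$-code, which you make explicit in your first step and the paper takes for granted, followed by Lemma~\ref{lem:Plotkin} in place of your column count.

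Your doubt about $q>2$ can be sharpened to a disproof, so restricting to $q=2$ is forced rather than a weakness of your argument. Take any $q\ge 3$, $k=1$, $t=1$, and let $f$ be the identity on $\mathbb{F}_q$. Every off-diagonal DRM entry equals $[3-1]^+=2$, so $\sum_{i<j}[\mathbf{D}]_{ij}=q(q-1)$. The stated bound then gives $r\ge 4(q-1)/q>2$ for even $q$ and $r\ge 4q/(q+1)>2$ for odd $q$, hence $r\ge 3$. Yet $\mathrm{Enc}(u)=(u,u,u)$ is an FCC with $r=2$, since distinct codewords are at Hamming distance $3=2t+1$.

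Your corrected constant is exactly what \eqref{eq:Plotkin1} gives with $S_H=q-1$, namely $r\ge \frac{2}{(q-1)q^{2k-1}}\sum_{i<j}[\mathbf{D}]_{ij}$. It reduces to $\frac{4}{4^{k}}\sum_{i<j}[\mathbf{D}]_{ij}$ at $q=2$, and in the example above it equals $2$, so it is tight there.

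Finally, for $q=2$ the full message space has $M=2^k$ even, so only the first branch of the lemma is ever used. The odd branch of the binary Plotkin bound concerns an odd number $M$ of codewords, such as a chosen sub-collection of messages, not odd $q$. That is why your rephrasing in terms of $M$ is the right one.
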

 
 From Lemma \ref{lem:Lenz} and the definition of DRM, it is clear that this bound depends on the distance requirements between pairs of message vectors corresponding to distinct function values. Consequently, evaluating the resulting bounds becomes computationally intensive as the size of the message space grows. However, the proposed bound in Corollary \ref{cor:Hamming_plotkin} requires the computation of the cardinalities of the preimage sets and the corresponding sums of pairwise distances only among vectors belonging to the same preimage set, which is relatively easier.
 
  Next, we specialize the general Plotkin-type bound stated in Theorem \ref{thm:general_plotkin} to the Lee distance. The Lee distance is a special case of   the  Wyner-Graham distance \cite{WG}. Let $\mathcal{X}=\mathbb{Z}_q$, where $q\ge 2$.  Therefore, $|\mathcal{X}|=q$, and we have the expression for $S_L$ in (\ref{eq:lee}). Substituting these quantities into the general Plotkin-type bound yields the following corollary for FCCs with the Lee distance. Moreover, the optimal redundancy $r_X^f(\mathcal{X},k,t)$ is now denoted by $r_L^f(q,k,t)$.
  \begin{cor}\label{cor:Lee_plotkin}
  	Let $f:\mathbb{Z}_q^k \to \operatorname{Im}(f)$ be a function with $Im(f)=\{\alpha_1,\alpha_2,...,\alpha_E\}$. For each $ i \in [E]$, let $\mathcal{U}_{\alpha_{i}} \triangleq \{\mathbf{u} \in \mathbb{Z}_q^k : f(\mathbf{u})=\alpha_i \}$ and $n_i = |\mathcal{U}_{\alpha_{i}}|$. Then, the optimal redundancy $ r_{L}^f(q, k, t)$ of an FCC under the Lee distance for $f$ satisfies 
  	\begin{equation}
  		r_{L}^f(q, k, t) \ge \frac{(2t+1)\left(q^{2k}-\displaystyle \sum_{i=1}^{E} {n_i}^2 \right)+ \displaystyle \sum_{i=1}^{E} \sum_{\mathbf{u},\mathbf{v} \in \mathcal{U}_{\alpha_{i}} } d_{L}(\mathbf{u}, \mathbf{v})}{S_Lq^{2k-1}} - k, 
  	\end{equation}
  	where
  	\begin{equation*}
  		S_L=
  		\begin{cases}
  			\frac{q^2}{4}, & q \text{ is even},\\
  			\frac{q^2-1}{4}, & q \text{ is odd}.
  		\end{cases}
  	\end{equation*}
  \end{cor}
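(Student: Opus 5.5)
This follows directly from Theorem \ref{thm:general_plotkin} by specializing the alphabet and the distance. We take $\mathcal{X}=\mathbb{Z}_q$ with $q\ge 2$, so $|\mathcal{X}|=q$ and $|\mathcal{X}|^{2k}=q^{2k}$. We take $d_X=d_L$, whose per-symbol extension to $\mathbb{Z}_q^k$ is the additive Lee distance. The only work is to confirm that the Lee distance satisfies the hypotheses of Definition \ref{WG_dis_defn}, so that Theorem \ref{thm:general_plotkin}, Lemma \ref{lem_WG} and Lemma \ref{lem:Plotkin} all apply.

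Conditions i and ii hold because $d_L(x,x)=\min(0,q)=0$, and $d_L$ depends only on $|x-y|$, which is symmetric. For condition iii, the map $y\mapsto y-x$ is a bijection of $\mathbb{Z}_q$, so $\sum_{y} d_L(x,y)=\sum_{z\in\mathbb{Z}_q} w_L(z)$, which does not depend on $x$.

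We then evaluate this sum. For even $q$ it is $2\sum_{j=1}^{q/2-1} j + q/2 = q^2/4$. For odd $q$ it is $2\sum_{j=1}^{(q-1)/2} j=(q^2-1)/4$. This agrees with $S_L$ in (\ref{eq:lee}).

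Condition iv, non-negative definiteness of the matrix $\mathbf{Q}$ with $a=0$, is the only nontrivial point. We will cite \cite{WG}, where Wyner and Graham verify it for the Lee metric. This is the one place where we rely on an external result rather than an elementary computation. If a self-contained argument were wanted, the natural first attempt is to write $w_L$ as a nonnegative combination of "cut" functions on the cycle $\mathbb{Z}_q$. Each cut gives a Gram-type matrix $q_{x,y}$ that is positive semidefinite, and a nonnegative sum of such matrices is again positive semidefinite.

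With the hypotheses verified, we substitute $|\mathcal{X}|=q$, $S_X=S_L$ and $d_X=d_L$ into the bound of Theorem \ref{thm:general_plotkin}. The denominator becomes $S_Lq^{2k-1}$ and the within-class sums become $\sum_{\mathbf{u},\mathbf{v}\in\mathcal{U}_{\alpha_i}} d_L(\mathbf{u},\mathbf{v})$, which gives exactly the stated inequality for $r_L^f(q,k,t)$.
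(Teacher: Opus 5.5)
Your proposal is correct and follows the same route as the paper: the paper also cites \cite{WG} for the fact that the Lee distance is a Wyner--Graham distance, then substitutes $|\mathcal{X}|=q$, $S_X=S_L$ and $d_X=d_L$ into Theorem~\ref{thm:general_plotkin}. Your explicit checks of conditions i--iii and of the value of $S_L$ fill in details the paper takes directly from \cite{WG}, and your cut-decomposition sketch for condition iv does go through, since each cut $S$ with $a\notin S$ contributes the positive semidefinite matrix $2\,\mathbf{1}_S\mathbf{1}_S^{T}$.
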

  
  A Plotkin-type bound on the optimal redundancy of FCCs under the Lee distance, applicable to arbitrary functions, was proposed in \cite{HRS} and is stated in the following lemma.
  \begin{lem}\cite{HRS}\label{lem:Lee}
  	Let $\mathbf{D}_f(t, \mathbf{u}_1, \dots, \mathbf{u}_{q^k})$ denote the DRM associated with a function $f$ under the Lee distance. Then 
  	\begin{equation}
  		r_{L}^f(q, k, t) \geq \begin{cases}
  			\begin{aligned}
  				&\frac{8}{q^{2k+1}}\sum_{i,j: i<j} [\mathbf{D}_f(t, \mathbf{u}_1, \dots, \mathbf{u}_{q^k})]_{ij},  \text{ if } q \text{ is even,} \\
  				&\frac{8q}{q^{2k}(q^2-1)}\sum_{i,j: i<j} [\mathbf{D}_f(t, \mathbf{u}_1, \dots, \mathbf{u}_{q^k})]_{ij},  \text{ if } q \text{ is odd}.
  			\end{aligned}
  		\end{cases}
  	\end{equation}
  \end{lem}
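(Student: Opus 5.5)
The plan is to reduce to the Wyner--Graham Plotkin inequality of Lemma~\ref{lem:Plotkin}, specialized to $\mathcal{X}=\mathbb{Z}_q$ with $S_X=S_L$ from (\ref{eq:lee}). This is the same mechanism as inequality (\ref{eq:Plotkin1}) in the proof of Theorem~\ref{thm:general_plotkin}, stopped before the inter-class sum is rewritten.

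\textbf{Step 1: the redundancy vectors form a $\mathbf{D}_L$-code.} Fix an FCC for $f$ under the Lee distance with redundancy $r=r_L^f(q,k,t)$ and encoder $\mathrm{Enc}(\mathbf{u})=(\mathbf{u},p(\mathbf{u}))$. Set $\mathbf{p}_i=p(\mathbf{u}_i)$ for $i\in[q^k]$. Since $d_L$ is additive over coordinates,
\[
d_L(\mathrm{Enc}(\mathbf{u}_i),\mathrm{Enc}(\mathbf{u}_j))=d_L(\mathbf{u}_i,\mathbf{u}_j)+d_L(\mathbf{p}_i,\mathbf{p}_j).
\]
Suppose $f(\mathbf{u}_i)\neq f(\mathbf{u}_j)$. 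Then Definition~\ref{FCC_defn} gives $d_L(\mathbf{p}_i,\mathbf{p}_j)\ge 2t+1-d_L(\mathbf{u}_i,\mathbf{u}_j)$. Since distances are also non-negative, this yields $d_L(\mathbf{p}_i,\mathbf{p}_j)\ge[\mathbf{D}_f(t,\mathbf{u}_1,\dots,\mathbf{u}_{q^k})]_{ij}$. The same inequality holds trivially when $f(\mathbf{u}_i)=f(\mathbf{u}_j)$, because then the entry is $0$. Hence $\{\mathbf{p}_i\}$ is a $\mathbf{D}_L$-code of size $M=q^k$ and length $r$.

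\textbf{Step 2: sum and apply the Plotkin inequality.} Summing over $i<j$ gives
\[
\sum_{i<j}[\mathbf{D}_f]_{ij}\le\sum_{i<j}d_L(\mathbf{p}_i,\mathbf{p}_j).
\]
Lemma~\ref{lem:Plotkin} with $|\mathcal{X}|=q$ bounds the right-hand side by $S_L q^{2k}r/(2q)=S_Lq^{2k-1}r/2$. One caveat: repeated redundancy vectors are allowed here, so the code is really a multiset. I expect Lemma~\ref{lem:Plotkin} to hold for multisets as well, since the Wyner--Graham averaging argument never uses distinctness of codewords.

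\textbf{Step 3: substitute $S_L$.} For even $q$, $S_L=q^2/4$, so the bound reads $q^{2k+1}r/8\ge\sum_{i<j}[\mathbf{D}_f]_{ij}$. Rearranging gives the first case of the lemma. For odd $q$, $S_L=(q^2-1)/4$, so the bound reads $(q^2-1)q^{2k-1}r/8\ge\sum_{i<j}[\mathbf{D}_f]_{ij}$. Rearranging gives
\[
r\ge \frac{8q}{q^{2k}(q^2-1)}\sum_{i<j}[\mathbf{D}_f]_{ij}.
\]

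The only step needing care is the first one: that the optimal FCC induces a valid $\mathbf{D}_L$-code, including the $[\cdot]^+$ truncation and the possibility of repeated redundancy vectors. The rest is direct substitution.
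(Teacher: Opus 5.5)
Your proof is correct and follows the same route the paper takes. The paper only cites this lemma, but the identical mechanism is its inequality (\ref{eq:Plotkin1}) in the proof of Theorem~\ref{thm:general_plotkin}: the redundancy vectors form a $\mathbf{D}_L$-code, Lemma~\ref{lem:Plotkin} bounds $\sum_{i<j}[\mathbf{D}_f]_{ij}$ by $S_Lq^{2k-1}r/2$, and substituting $S_L=q^2/4$ or $(q^2-1)/4$ gives the two cases. Your explicit treatment of the $[\cdot]^+$ truncation and of repeated redundancy vectors (which the column-counting Plotkin argument tolerates) supplies details the paper leaves implicit.
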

  
  From Lemma \ref{lem:Lee}, it is clear that the bound depends on the entries of the DRM, and hence its evaluation becomes computationally intensive as the size of the message space grows. In contrast, as in the Hamming distance case, the proposed bound in Corollary \ref{cor:Lee_plotkin} requires the computation of the cardinalities of the preimage sets and the corresponding sums of pairwise distances only among vectors belonging to the same preimage set, which is relatively easier.
 
\section{Plotkin-type Bound for FCCs for Linear Functions }\label{sec_Plotkin_linear}

  In this section, we simplify  the general Plotkin-type bound stated in Theorem \ref{thm:general_plotkin} for linear functions under the Wyner-Graham distance. This is achieved by explicitly characterizing the induced partition of the message space through the sizes of the partition classes and the pairwise distances among vectors within each class. In this section, we assume that the alphabet $\mathcal{X}$ is equipped with two operations $'+'$ and $'.'$ such that $(\mathcal{X},+,.)$ forms a finite ring.  This additional algebraic structure allows us to consider linear functions $f : \mathcal{X}^{k} \to \mathcal{X}^{l} $. Let the additive identity be $0$. We define the Wyner-Graham weight of a symbol $x \in \mathcal{X}$, denoted by $w_X(x)$, as $w_X(x) = d_X(x,0)$. The Wyner-Graham weight of a vector $\mathbf{x} \in \mathcal{X}^{n}$ is then defined by $w_X(\mathbf{x}) = d_X(\mathbf{x},\mathbf{0})$. The formal definition and properties of linear function is stated below.
  \begin{defn}[Linear Function under the Wyner-Graham distance]
  	Let $\mathcal{X}$ be a finite ring. A function $f : \mathcal{X}^{k} \to \mathcal{X}^{l} $ is said to be linear if it satisfies the following condition: $$f(\alpha . \mathbf{x} + \beta . \mathbf{y}) = \alpha . f(\mathbf{x}) +\beta. f(\mathbf{y}), \forall\hspace{0.1cm} \mathbf{x},\mathbf{y} \in \mathcal{X}^{k} \text{ and } \alpha , \beta \in  \mathcal{X}.$$ The kernel of $f$, denoted by $ker(f)$, is defined as $ker(f) \triangleq \{\mathbf{x} \in \mathcal{X}^{k} : f(\mathbf{x}) =\mathbf{0} \}$, where $\mathbf{0}$ denotes the zero vector in $\mathcal{X}^{l}$. 
  \end{defn} 
  
  The set of all cosets of $ker(f)$ is denoted by $\mathcal{X}^{k}/ker(f)$. Since $f : \mathcal{X}^{k} \to \mathcal{X}^{l} $ is linear, the cosets in $\mathcal{X}^{k}/ker(f)$ form a partition of $\mathcal{X}^{k}$, and each coset consists of all vectors that map to a given function value. Therefore, we have $\mathcal{U}_{\alpha_{i}} = \mathbf{u}_i + ker(f)$ for some $\mathbf{u}_i$ and
  \begin{equation}\label{eq:linear1}
  n_i=|\mathcal{U}_{\alpha_{i}}|={|\mathcal{X}|}^{k-l}, \forall i \in [E].
  \end{equation}  
  When the Wyner-Graham distance is translation-invariant, all cosets have the same distance distribution, which is same as the distance distribution of $ker(f)$. Therefore, we have 
  \begin{equation}\label{eq:linear2}
  	\sum_{\mathbf{u},\mathbf{v} \in \mathcal{U}_{\alpha_{i}} } d_{X}(\mathbf{u}, \mathbf{v}) = \sum_{\mathbf{u},\mathbf{v} \in ker(f) } d_{X}(\mathbf{u}, \mathbf{v}), \forall i \in [E].
  \end{equation}
  
  A distance $d_X: \mathcal{X} \times \mathcal{X} \to \mathbb{N}_0$ is said to be \textit{translation-invariant} if $d_X(x+z,y+z)=d_X(x,y), \forall x,y,z \in \mathcal{X}$. Equivalently, we have $d_X(x,y)=d_X(x-y,0)=w_X(x-y)$ and $d_X(\mathbf{x},\mathbf{y})=w_X(\mathbf{x}-\mathbf{y})$.  When the Wyner-Graham distance is  translation-invariant, we have $\displaystyle  \sum_{\mathbf{u},\mathbf{v} \in ker(f) } d_{X}(\mathbf{u}, \mathbf{v})= \displaystyle \sum_{\mathbf{u},\mathbf{v} \in ker(f) } w_{X}(\mathbf{u}-\mathbf{v})$. Since $ker(f)$ is linear, for every fixed $\mathbf{v} \in ker(f)$, we have $\{\mathbf{u}-\mathbf{v} : \mathbf{u} \in ker(f)\} =ker(f)$. Let $s$ denotes the sum of Wyner-Graham weights of the vectors in $ker(f)$,  i.e, $s = \displaystyle \sum_{\mathbf{w}\in ker(f)}w_{X}(\mathbf{w}) $. Therefore, we obtain 
  \begin{equation}\label{eq:linear3}  
  	\sum_{\mathbf{u},\mathbf{v} \in ker(f) } d_{X}(\mathbf{u}, \mathbf{v})= \displaystyle \sum_{\mathbf{u},\mathbf{v} \in ker(f) } w_{X}(\mathbf{u}-\mathbf{v}) =|ker(f)| \displaystyle \sum_{\mathbf{w}\in ker(f)}w_{X}(\mathbf{w}) = {|\mathcal{X}|}^{k-l} s. 
  \end{equation}

Substituting $E=|Im(f)|=|\mathcal{X}|^l$, (\ref{eq:linear1}) and (\ref{eq:linear2}) in Theorem \ref{thm:general_plotkin}, we obtain a simplified lower bound on the optimal redundancy of FCCs for linear functions as follows.  
\allowdisplaybreaks
\begin{align}
	r_{X}^f(\mathcal{X}, k, t) & \ge  \frac{(2t+1)\left(|\mathcal{X}|^{2k}-|\mathcal{X}|^l {|\mathcal{X}|}^{2k-2l} \right)+ |\mathcal{X}|^l \displaystyle \sum_{\mathbf{u},\mathbf{v} \in ker(f) } d_{X}(\mathbf{u}, \mathbf{v})}{S_X|\mathcal{X}|^{2k-1}} - k \nonumber \\ & = \frac{(2t+1)|\mathcal{X}|(1-{|\mathcal{X}|}^{-l})}{S_X} +\frac{|\mathcal{X}|^l \displaystyle \sum_{\mathbf{u},\mathbf{v} \in ker(f) } d_{X}(\mathbf{u}, \mathbf{v})}{S_X|\mathcal{X}|^{2k-1}}-k. \nonumber
\end{align} 
The obtained lower bound on the optimal redundancy of FCCs for linear functions and its simplified form, obtained using (\ref{eq:linear3}), when the Wyner Graham-distance is translation-invariant, are stated in the following corollary.  
  \begin{cor}\label{cor_linear}
  	Let $f : \mathcal{X}^{k} \to \mathcal{X}^{l} $ be a linear function with $ker(f) \triangleq \{\mathbf{u} \in \mathcal{X}^{k} : f(\mathbf{u}) =\mathbf{0} \}$. Then, the optimal redundancy $ r_{X}^f(\mathcal{X}, k, t)$ of an FCC for $f$ with Wyner Graham distance satisfies
  	\begin{equation}
  	r_{X}^f(\mathcal{X}, k, t) \ge  \frac{(2t+1)|\mathcal{X}|(1-{|\mathcal{X}|}^{-l})}{S_X} +\frac{|\mathcal{X}|^l \displaystyle \sum_{\mathbf{u},\mathbf{v} \in ker(f) } d_{X}(\mathbf{u}, \mathbf{v})}{S_X|\mathcal{X}|^{2k-1}}-k. 
  	\end{equation} 
  	Furthermore, when the Wyner-Graham distance is translation-invariant, the bound simplies to
  	\begin{equation}\label{eq_linear}
  		r_{X}^f(\mathcal{X}, k, t) \ge \frac{(2t+1)|\mathcal{X}|(1-{|\mathcal{X}|}^{-l})}{S_X} +\frac{s}{S_X|\mathcal{X}|^{k-1}}-k,
  	\end{equation}
  	where $s = \displaystyle \sum_{\mathbf{w}\in ker(f)}w_{X}(\mathbf{w}) $.
  \end{cor}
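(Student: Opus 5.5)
The plan is to specialize Theorem \ref{thm:general_plotkin} to the partition induced by a linear $f$. Linearity makes this partition completely explicit: its classes are the cosets of $ker(f)$. For both claims I would first pin down the two ingredients the theorem needs, namely the class sizes $n_i$ and the within-class distance sums, and then substitute them into the bound.

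\textbf{Step 1: the partition.} Since $f$ is additive, $f(\mathbf{u})=f(\mathbf{v})$ if and only if $\mathbf{u}-\mathbf{v}\in ker(f)$. So each preimage set is a coset $\mathbf{u}_i+ker(f)$, and all classes have size $|ker(f)|$. I would take $E=|Im(f)|=|\mathcal{X}|^l$ and $n_i=|\mathcal{X}|^{k-l}$, as recorded in (\ref{eq:linear1}). That is, I treat $f$ as surjective, or equivalently I work with $|ker(f)|=|\mathcal{X}|^k/E$. This gives $\sum_i n_i^2 = E\,(|\mathcal{X}|^k/E)^2=|\mathcal{X}|^{2k-l}$. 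The first term of the theorem therefore becomes
\[
(2t+1)|\mathcal{X}|^{2k}(1-|\mathcal{X}|^{-l})/(S_X|\mathcal{X}|^{2k-1}) = (2t+1)|\mathcal{X}|(1-|\mathcal{X}|^{-l})/S_X .
\]

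\textbf{Step 2: within-class sums.} For the first displayed inequality I would use (\ref{eq:linear2}), which replaces each coset's pairwise distance sum by that of $ker(f)$. Summing over the $E=|\mathcal{X}|^l$ classes then gives the numerator term $|\mathcal{X}|^l\sum_{\mathbf{u},\mathbf{v}\in ker(f)}d_X(\mathbf{u},\mathbf{v})$. Dividing by $S_X|\mathcal{X}|^{2k-1}$ and keeping the $-k$ yields the first bound.

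\textbf{Step 3: the translation-invariant case.} Here I would prove the equality of coset sums directly. The bijection $\mathbf{w}\mapsto \mathbf{u}_i+\mathbf{w}$ from $ker(f)$ onto the coset preserves $d_X$ coordinatewise, so the two pairwise sums agree. Then, as in (\ref{eq:linear3}), I would write $d_X(\mathbf{u},\mathbf{v})=w_X(\mathbf{u}-\mathbf{v})$. For each fixed $\mathbf{v}$, the map $\mathbf{u}\mapsto\mathbf{u}-\mathbf{v}$ permutes $ker(f)$ because the kernel is an additive subgroup. This gives $|ker(f)|\,s=|\mathcal{X}|^{k-l}s$. Substituting, the second term becomes $|\mathcal{X}|^l|\mathcal{X}|^{k-l}s/(S_X|\mathcal{X}|^{2k-1})=s/(S_X|\mathcal{X}|^{k-1})$, which is (\ref{eq_linear}).

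\textbf{Main obstacle.} The arithmetic is routine; the delicate points are the hypotheses. The first is the count $n_i=|\mathcal{X}|^{k-l}$, which needs $f$ to be onto $\mathcal{X}^l$. If it is not, I would state the bound with $E=|Im(f)|$ and $|ker(f)|=|\mathcal{X}|^k/E$, and the same computation goes through. The second is the reliance on (\ref{eq:linear2}) in the general (non-invariant) case. Without translation invariance the coset sums need not equal the kernel sum. So if I could not justify (\ref{eq:linear2}) there, I would fall back to keeping $\sum_i\sum_{\mathbf{u},\mathbf{v}\in\mathcal{U}_{\alpha_i}}d_X(\mathbf{u},\mathbf{v})$ in the bound, and use the kernel form only under translation invariance.
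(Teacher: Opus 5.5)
Your proposal follows the paper's argument exactly: you specialize Theorem~\ref{thm:general_plotkin} to the coset partition via (\ref{eq:linear1}), (\ref{eq:linear2}) and (\ref{eq:linear3}), and your arithmetic is right. Both caveats you raise are genuine, and the paper's own proof has the same two gaps.

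First, the paper simply sets $E=|\mathcal{X}|^l$, so surjectivity of $f$ is a hidden hypothesis. Without it the statement is false. For the zero map, $r=0$ suffices. Yet $s=k|\mathcal{X}|^{k-1}S_X$, so (\ref{eq_linear}) gives the positive value $(2t+1)|\mathcal{X}|(1-|\mathcal{X}|^{-l})/S_X$; this already happens for the Hamming distance over $\mathbb{F}_2$. Your repair with $E=|\operatorname{Im}(f)|$ and $|\ker(f)|=|\mathcal{X}|^k/E$ is correct. It replaces $|\mathcal{X}|^{-l}$ by $1/E$ and the factor $|\mathcal{X}|^l$ by $E$, and leaves the term $s/(S_X|\mathcal{X}|^{k-1})$ unchanged.

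Second, the paper justifies (\ref{eq:linear2}) only for translation-invariant distances, but then uses it for the first inequality. It does fail in general. On $\mathbb{Z}_6$, let $d_X(x,y)=d_L(\sigma(x),\sigma(y))$, where $\sigma$ is the transposition of $2$ and $4$; relabeling preserves conditions i--iv. Let $f(x_1,x_2)=x_1+3x_2$. The ordered pairwise sum over $\ker(f)=\{(3a,a):a\in\mathbb{Z}_6\}$ is $108$, while over the coset $(1,0)+\ker(f)$ it is $72$. So the first inequality is established only under translation invariance, where it coincides with the second. In general, what the argument actually yields is your fallback: keep the class sums of Theorem~\ref{thm:general_plotkin}.
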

  
  We now specialize the above result to the Hamming distance, which is both translation-invariant and a special case of the Wyner-Graham distance. We have, $\mathcal{X}=\mathbb{F}_q$, $S_X=S_H=q-1$ and $s = \displaystyle \sum_{\mathbf{w}\in ker(f)}w_{X}(\mathbf{w}) $. Substituting these quantities into the bound in Corollary \ref{cor_linear}, a lower bound on optimal redundancy of FCCs for linear functions with Hamming distance is obtained as 
  \begin{equation}\label{bound_linear}
  		r_{H}^f(q, k, t)  \ge  \frac{q(2t+1)(1-q^{-l})}{q-1} +\frac{s}{(q-1)q^{k-1}}-k.
  \end{equation}
  The bound in (\ref{bound_linear}) is exactly the same as the Plotkin bound for FCCs for linear functions under the Hamming distance obtained in \cite{PR}. 
  
  We now specialize the result to the Lee distance. The Wyner-Graham distance includes the Lee distance as special case, and the Lee distnce is also translation-invariant. For Lee distance, we have $\mathcal{X}=\mathbb{Z}_q$, $S_X=S_L$ is given by (\ref{eq:lee}) and $s = \displaystyle \sum_{\mathbf{w}\in ker(f)}w_{L}(\mathbf{w}) $. Substituting these quantities in (\ref{eq_linear}), for a linear function $f:\mathbb{Z}_{q}^{k} \to \mathbb{Z}_{q}^{l}$, the optimal redundancy of an FCC under the Lee distance satisfies
  \begin{equation}\label{bound_linear_Lee}
  	r_{L}^f(q, k, t)  \ge \frac{q(2t+1)(1-q^{-l})}{S_L} +\frac{s}{S_Lq^{k-1}}-k.
  \end{equation} 
  The bound in (\ref{bound_linear_Lee}) is exactly the same as the Plotkin bound for FCCs under the Lee distance for linear functions obtained in \cite{GA}.  
\section{Simplified Plotkin-type Bounds for functions under the Hamming distance }\label{sec_Plotkin_Ham} 

In this section, we simplify the proposed bound for FCCs under the Hamming distance in Corollary \ref{cor:Hamming_plotkin} to several classes of functions, including the Hamming weight function, the Hamming weight distribution function and the monomial function. This is achieved by explicitly characterizing the induced partition of the message space through the sizes of the partition classes and the pairwise distances among vectors within each class.

\subsection{Hamming weight function}
Consider the Hamming weight function  $f(\mathbf{u}) = {w}_H(\mathbf{u})=\displaystyle \sum_{i=1}^k 1_{(u_i \ne 0)}$, where $\mathbf{u} \in \mathbb{F}_q^k$, and \\ $1_{(u_i \ne 0)}= \begin{cases}
	1, & u_i \ne 0, \\
	0, & u_i =0.
\end{cases}$ The expressiveness of this function is $E=k+1$, and $Im(f)=\{0,1,...,k\}$.  For each $ i \in [0:k]$, we have $\mathcal{U}_{i} \triangleq \{\mathbf{u} \in \mathbb{F}_q^k : {w}_H(\mathbf{u})=i \}$.  Every vector $\mathbf{u} \in \mathcal{U}_{i}$ has exactly $i$ non zero coordinates, each of which can take any value in $\mathbb{F}_q \textbackslash \{0\}$.  Therefore, $n_i = |\mathcal{U}_{i}|=\binom{k}{i}(q-1)^i$. 
Therefore, we have
\begin{equation}\label{eq:sum_squares_q}
	 \sum_{i=0}^{k} n_i^2 = \sum_{i=0}^{k} \binom{k}{i}^{2}(q-1)^{2i}.
\end{equation}

 Now we proceed to show that $\displaystyle \sum_{i=1}^{E} \sum_{\mathbf{u},\mathbf{v} \in \mathcal{U}_{\alpha_{i}} } d_{H}(\mathbf{u}, \mathbf{v}) = \sum_{i=0}^{k} {n_i}^2 \left( 2i-\frac{qi^2}{k(q-1)} \right)$.
 
Let $S_i=\displaystyle \sum_{\mathbf{u},\mathbf{v} \in \mathcal{U}_{i} } d_{H}(\mathbf{u}, \mathbf{v})$. Using  $d_H(\mathbf{u},\mathbf{v})=\displaystyle \sum_{j=1}^{k} 1_{(u_j\neq v_j)}$, we obtain $S_i=\displaystyle \sum_{j=1}^{k}\sum_{\mathbf{u},\mathbf{v}\in\mathcal{U}_i} 1_{(u_j\neq v_j)}$. Let $N_{i,j} = \displaystyle \sum_{\mathbf{u},\mathbf{v}\in\mathcal{U}_i} 1_{(u_j\neq v_j)}$ for $1\le j\le k$, which counts the number of ordered pairs in $ |\mathcal U_i|^2$ that differ in the $j$-th coordinate. A set $\mathcal U\subseteq \mathcal{X}^k$ is said to be \emph{invariant under coordinate permutations} if, for every	$\mathbf u=(u_1,\ldots,u_k)\in\mathcal{U}$ and every permutation $\pi(l)$, where $l \in [k]$,  $(u_{\pi(1)},u_{\pi(2)},\ldots,u_{\pi(k)})\in\mathcal{U}$. Since $\mathcal{U}_i$ is invariant under coordinate permutations, the quantities $N_{i,j}$ are the same for all $j$, allowing us to denote this common value simply as $N_i$. Therefore, $S_i = \displaystyle \sum_{\mathbf{u},\mathbf{v} \in \mathcal{U}_{i} } d_{H}(\mathbf{u}, \mathbf{v}) = \sum_{j=1}^{k} N_{i,j} = kN_i$.

To determine $N_i$, let $B_a=\bigl|\{\mathbf{u}\in\mathcal{U}_i:u_i=a\}\bigr|$ for $a\in\mathbb F_q$ and any fixed coordinate $j$. Then the number of ordered pairs that agree in this coordinate is $\sum_{a\in\mathbb F_q} B_a^2$. Since there are $n_i^2$ ordered pairs in total, it follows that $N_i=n_i^2-\sum_{a\in\mathbb F_q} B_a^2$.

We evaluate these cardinalities by splitting $a$ into the zero and non-zero cases:
\begin{align*}
	B_0 &= \binom{k-1}{i}(q-1)^i, \quad \text{for } a = 0, \\
	B_a &= \binom{k-1}{i-1}(q-1)^{i-1}, \quad \text{for } a \in \mathbb{F}_q^{\ast}.
\end{align*}
Since there are exactly $q-1$ elements $a \in \mathbb{F}_q^{\ast}$, separating the $a=0$ term from the rest of the summation yields
\[
\sum_{a\in\mathbb F_q} B_a^2 = \binom{k-1}{i}^{2}(q-1)^{2i} + (q-1)\binom{k-1}{i-1}^{2}(q-1)^{2i-2}.
\]
Substituting this summation into the expression $N_i = n_i^2 - \sum_{a\in\mathbb F_q} B_a^2$ and using $S_i = k N_i$, the total distance sum for weight class $i$ expands and simplifies as follows
\begin{align*}
	S_i
	&= k\left[ n_i^2 - B_0^2 - (q-1)B_a^2 \right] 
	= k\left[ n_i^2 -\frac{(k-i)^2}{k^2}n_i^2 -\frac{i^2}{k^2(q-1)}n_i^2 \right] \\
	&= n_i^2 \left[ k-\frac{k^2-2ki+i^2}{k} -\frac{i^2}{k(q-1)} \right] 
	= n_i^2 \left[ 2i-\frac{q i^2}{k(q-1)} \right].
\end{align*}

Summing the contributions $S_i$ across all possible weight classes from $0$ to $k$, we obtain 
\begin{align}\label{eq:s}
	 \sum_{i=0}^{k} \sum_{\mathbf{u},\mathbf{v} \in \mathcal{U}_{\alpha_{i}} } d_{H}(\mathbf{u}, \mathbf{v}) = \sum_{i=0}^{k} S_i 
	= \sum_{i=0}^{k} n_i^2 \left( 2i-\frac{q i^2}{k(q-1)} \right).
\end{align}
Substituting \eqref{eq:sum_squares_q} and \eqref{eq:s} into \eqref{eq:Hamming_plotkin} and simplifying, we obtain
\begin{align}
	r_H^{\mathrm{w}}(q,k,t) & \ge \frac{1}{(q-1)q^{2k-1}} \left[ (2t+1) \left(q^{2k}-\displaystyle \sum_{i=0}^{k} n_i^2 \right)+ \displaystyle \sum_{i=0}^{k} n_i^2 \left( 2i-\frac{q i^2}{k(q-1)} \right) \right] - k \nonumber \\ 
	& = \frac{1}{(q-1)q^{2k-1}} \left[ (2t+1)q^{2k} 
	+ \sum_{i=0}^{k} n_i^2 \left( 2i-(2t+1) -\frac{q i^2}{k(q-1)} \right) \right] - k \nonumber \\
	& = \frac{1}{(q-1)q^{2k-1}} \left[ (2t+1)q^{2k} 
	+ \sum_{i=0}^{k} \binom{k}{i}^{2}(q-1)^{2i} \left( 2i-(2t+1) -\frac{q i^2}{k(q-1)} \right) \right] - k. \label{eq:Ham_wt} 
\end{align}
For  $q=2$, \eqref{eq:Ham_wt} simplifies to
\begin{equation}\label{eq:Ham_wt2}
	r_H^{\mathrm{w}}(k,t)
	\ge 
	\frac{1}{2^{2k-1}} \left[ (2t+1)4^k 
	+  \sum_{i=0}^{k} \binom{k}{i}^2 \left( 2i-\frac{2i^2}{k}-(2t+1) \right) \right] - k.
\end{equation}
The obtained bound for the Hamming weight function is stated in the following corollary.
\begin{cor}\label{cor:plotkin_Ham_weight}
	The optimal redundancy $r_H^{\mathrm{w}}(q,k,t)$ for the Hamming weight function satisfies
	\begin{equation*}
		\label{eq:plotkin_Ham_weight}
		r_H^{\mathrm{w}}(q,k,t)
		\ge 
		\frac{1}{(q-1)q^{2k-1}} \left[ (2t+1)q^{2k} 
		+ \sum_{i=0}^{k} \binom{k}{i}^{2}(q-1)^{2i} \left( 2i-(2t+1) -\frac{q i^2}{k(q-1)} \right) \right] - k.
	\end{equation*}
\end{cor}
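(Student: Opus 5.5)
The corollary is obtained by specializing Corollary \ref{cor:Hamming_plotkin} to $f=w_H$. The only function-specific inputs are the class sizes $n_i$ and the within-class distance sums $S_i=\sum_{\mathbf{u},\mathbf{v}\in\mathcal{U}_i}d_H(\mathbf{u},\mathbf{v})$.

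\textbf{Class sizes.} The preimage classes are the weight shells $\mathcal{U}_i$, $i\in[0:k]$. A vector of weight $i$ is determined by its support, chosen in $\binom{k}{i}$ ways, and by the $(q-1)^i$ nonzero values placed on it. Hence $n_i=\binom{k}{i}(q-1)^i$, and $\sum_i n_i^2$ is as in \eqref{eq:sum_squares_q}.

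\textbf{Within-class distance sums.} I would write $d_H(\mathbf{u},\mathbf{v})=\sum_j 1_{(u_j\neq v_j)}$ and interchange the sums. Each shell $\mathcal{U}_i$ is invariant under coordinate permutations, so every coordinate contributes the same count $N_i$, giving $S_i=kN_i$. Fix one coordinate and let $B_a$ be the number of vectors in $\mathcal{U}_i$ with value $a$ there. Then
\[
N_i=n_i^2-\sum_{a\in\mathbb{F}_q}B_a^2 .
\]
Counting directly, $B_0=\binom{k-1}{i}(q-1)^i=\frac{k-i}{k}n_i$. For each of the $q-1$ values $a\in\mathbb{F}_q^{\ast}$, $B_a=\binom{k-1}{i-1}(q-1)^{i-1}=\frac{i}{k(q-1)}n_i$. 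Substituting these gives
\[
S_i=n_i^2\left(2i-\frac{qi^2}{k(q-1)}\right).
\]
This identity also holds at $i=0$ and $i=k$, where the conventions $\binom{k-1}{-1}=\binom{k-1}{k}=0$ still make the formulas for $B_0$ and $B_a$ correct.

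\textbf{Substitution.} Insert $\sum_i n_i^2$ and $\sum_i S_i$ into \eqref{eq:Hamming_plotkin}. Writing $(2t+1)\bigl(q^{2k}-\sum_i n_i^2\bigr)$ as $(2t+1)q^{2k}-\sum_i(2t+1)n_i^2$ lets the two sums merge into a single sum over $i$ with coefficient $2i-(2t+1)-\frac{qi^2}{k(q-1)}$. Replacing $n_i^2$ by $\binom{k}{i}^2(q-1)^{2i}$ then gives exactly the stated expression.

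The main obstacle is the algebra for $S_i$. It requires expressing $B_0$ and $B_a$ as multiples of $n_i$ via $\binom{k-1}{i}=\frac{k-i}{k}\binom{k}{i}$ and $\binom{k-1}{i-1}=\frac{i}{k}\binom{k}{i}$, and checking that
\[
k\left(1-\frac{(k-i)^2}{k^2}-\frac{i^2}{k^2(q-1)}\right)=2i-\frac{qi^2}{k(q-1)}.
\]
The remaining steps are direct substitution.
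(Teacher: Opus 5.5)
Your proposal is correct and follows essentially the same route as the paper: you compute $n_i=\binom{k}{i}(q-1)^i$, use coordinate-permutation invariance to write $S_i=kN_i$ with $N_i=n_i^2-\sum_{a}B_a^2$, evaluate $B_0$ and $B_a$, and substitute into Corollary~\ref{cor:Hamming_plotkin}. Rewriting $B_0$ and $B_a$ as multiples of $n_i$ and checking the boundary cases $i=0$ and $i=k$ only make the paper's algebra more explicit.
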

\begin{example}
	Consider the Hamming weight function $f(\mathbf{u}) = {w}_H(\mathbf{u})$ for $\mathbf{u} \in\mathbb F_2^3$. For $t=1$, substituting $k=3$ in \eqref{eq:Ham_wt2}, we get  $r_H^{\mathrm{w}}(3,1) \ge \frac{15}{8}$. Since the redundancy must be an integer, we have $r_H^{\mathrm{w}}(3,1) \ge 2$.
\end{example}
Simplified lower bounds on the optimal redundancy of FCCs for the Hamming weight function were derived in \cite{LBWY} and \cite{ZXZG}, but only for $k > t$. Whereas, the obtained lower bound in Corollary \ref{cor:plotkin_Ham_weight} is applicable in all parameter regimes.
\subsection{Hamming weight distribution function}
We now obtain a simplified bound for the Hamming-weight distribution function, defined as $f(\mathbf{u}) = \triangle_T(\mathbf{u}) \triangleq \left\lfloor \frac{{w}_H(\mathbf{u})}{T} \right\rfloor$, where $\mathbf{u} \in \mathbb{F}_q^k$ and $T \in \mathbb{N}$ is chosen such that $T \mid (k+1)$. Therefore, $E = \frac{k+1}{T}$ and $\operatorname{Im}(f) = \{0, 1, \ldots, E-1\}$. For each $j \in \{0, 1, \ldots, E-1\}$, the corresponding preimage set is $\mathcal{V}_j = \left\{ \mathbf{u} \in \mathbb{F}_q^k : jT \le {w}_H(\mathbf{u}) \le (j+1)T-1 \right\}$. Thus, $\mathcal{V}_j$ contains all vectors whose Hamming weights lie in the interval $[jT, (j+1)T-1]$. Consequently, the cardinality of $\mathcal{V}_j$ is 
\begin{equation*}
	N_j = |\mathcal{V}_j| = \sum_{i=jT}^{(j+1)T-1} \binom{k}{i}(q-1)^i. 
\end{equation*}
To apply in Corollary \ref{cor:Hamming_plotkin}, we compute the sum of the squared preimage sizes
\begin{equation}\label{eq:Ham_dis}
	\sum_{j=0}^{\frac{k+1}{T}-1}N_j^2 = \sum_{j=0}^{\frac{k+1}{T}-1} \left( \sum_{i=jT}^{(j+1)T-1} \binom{k}{i}(q-1)^i \right)^2. 
\end{equation} 

Since
$\mathcal{V}_j = \bigcup_{i=jT}^{(j+1)T-1} \mathcal{U}_i $, where $\mathcal{U}_i= \{\mathbf{u}\in\mathbb{F}_q^k: \mathrm{w}_H(\mathbf{u})=i\}$
and the sets $\{\mathcal{U}_i\}_{i=0}^{k}$ are invariant under coordinate permutations and pairwise disjoint, $\mathcal{V}_j$ is also invariant under coordinate permutations. Hence,
$\displaystyle \sum_{\mathbf{u},\mathbf{v}\in\mathcal{V}_j}d_H(u_l,v_l)=\sum_{\mathbf{u},\mathbf{v}\in\mathcal{V}_j}d_H(u_1,v_1),$ for all $l\in[k]$, and therefore $\displaystyle \sum_{\mathbf{u},\mathbf{v}\in\mathcal{V}_j}
d_H(\mathbf{u},\mathbf{v})=k \sum_{\mathbf{u},\mathbf{v}\in\mathcal{V}_j} d_H(u_1,v_1).$ Summing over all $j$ establishes 
\begin{equation}
	\label{eq:Ham_dis2}
	\sum_{j=0}^{\frac{k+1}{T}-1}
	\sum_{\mathbf{u},\mathbf{v}\in\mathcal{V}_j}
	d_H(\mathbf{u},\mathbf{v})
	=
	k
	\sum_{j=0}^{\frac{k+1}{T}-1}
	\sum_{\mathbf{u},\mathbf{v}\in\mathcal{V}_j}
	d_H(u_1,v_1).
\end{equation}
 Substituting \eqref{eq:Ham_dis} and \eqref{eq:Ham_dis2} into \eqref{eq:Hamming_plotkin}, we obain a lower bound on the optimal redundancy of FCC for the Hamming weight distribution function, as stated in the following corollary.
\begin{cor}\label{cor:plotkin_Ham_distribution}
	The optimal redundancy $r_H^{\Delta_T}(q,k,t)$ for the Hamming weight distribution function satisfies
	\begin{align}
		\label{eq:plotkin_Ham_distribution}
		r_L^{\triangle_T}(q,k,t)
		\ge &
		\frac{1}{q^{2k-1}(q-1)}
		\left[
		(2t+1)
		\left(
		q^{2k}
		-\sum_{j=0}^{\frac{k+1}{T}-1} \left( \sum_{i=jT}^{(j+1)T-1} \binom{k}{i}(q-1)^i \right)^2. 
		\right)
		+ \right. \nonumber \\ & \left. \hspace{7cm}
		k
		\sum_{j=0}^{\frac{k+1}{T}-1}
		\sum_{\mathbf{u},\mathbf{v}\in\mathcal{V}_j}
		\!\!d_H(u_1,v_1)
		\right]
		-k.
	\end{align}
\end{cor}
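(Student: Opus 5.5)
The bound in Corollary \ref{cor:plotkin_Ham_distribution} comes from specializing Corollary \ref{cor:Hamming_plotkin} to the partition $\{\mathcal{V}_j\}_{j=0}^{E-1}$ induced by $\triangle_T$, with $E=\frac{k+1}{T}$. Corollary \ref{cor:Hamming_plotkin} holds for an arbitrary $f$ and needs only two inputs: the preimage sizes and the within-class distance sums. So I will compute each of these and substitute, keeping $q$ and $S_H=q-1$ throughout.

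\textbf{Step 1: preimage sizes.} Since $T\mid(k+1)$, the intervals $[jT,(j+1)T-1]$ for $j=0,\dots,E-1$ partition $\{0,1,\dots,k\}$. Hence $\mathcal{V}_j$ is the disjoint union of the weight classes $\mathcal{U}_i$ with $jT\le i\le (j+1)T-1$. Using $|\mathcal{U}_i|=\binom{k}{i}(q-1)^i$, as established for the Hamming weight function, we get $N_j=\sum_{i=jT}^{(j+1)T-1}\binom{k}{i}(q-1)^i$. Squaring and summing over $j$ gives \eqref{eq:Ham_dis}.

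\textbf{Step 2: within-class distance sums.} Write $d_H(\mathbf{u},\mathbf{v})=\sum_{l=1}^k d_H(u_l,v_l)$ and swap the order of summation. For any permutation $\pi$ of the coordinates, the map $\mathbf{u}\mapsto \mathbf{u}^{\pi}$ is a bijection of $\mathcal{V}_j$ onto itself, because each $\mathcal{U}_i$ is closed under coordinate permutations, and hence so is their union. Choosing $\pi$ to be the transposition of coordinates $1$ and $l$ shows that $\sum_{\mathbf{u},\mathbf{v}\in\mathcal{V}_j}d_H(u_l,v_l)=\sum_{\mathbf{u},\mathbf{v}\in\mathcal{V}_j}d_H(u_1,v_1)$ for every $l$. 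Summing over $l$ and then over $j$ yields \eqref{eq:Ham_dis2}.

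\textbf{Step 3: substitution.} Insert \eqref{eq:Ham_dis} and \eqref{eq:Ham_dis2} into \eqref{eq:Hamming_plotkin}, with denominator $(q-1)q^{2k-1}$, to obtain the stated inequality. Two typographical points in the displayed statement should be fixed. The left-hand side should read $r_H^{\triangle_T}$ rather than $r_L^{\triangle_T}$. The stray period inside the parenthesis should be removed.

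\textbf{Main obstacle.} The only non-routine point is Step 2. The bijection argument must use that the permutation preserves the class $\mathcal{V}_j$ and acts simultaneously on both $\mathbf{u}$ and $\mathbf{v}$. If a fully explicit expression were desired, one could go further, as in the Hamming weight case, and write $\sum_{\mathbf{u},\mathbf{v}\in\mathcal{V}_j}d_H(u_1,v_1)=N_j^2-\sum_{a}B_a^2$. Here $B_0=\sum_i\binom{k-1}{i}(q-1)^i$ and $B_a=\sum_i\binom{k-1}{i-1}(q-1)^{i-1}$ for $a\neq0$, with both sums running over $i$ in the $j$-th interval. This is not needed for the statement as given.
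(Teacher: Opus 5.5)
Your proposal is correct and follows the paper's proof. It writes $\mathcal{V}_j$ as a disjoint union of Hamming-weight classes to obtain $N_j$. It then uses invariance of $\mathcal{V}_j$ under coordinate permutations to reduce the within-class distance sum to $k$ times the first-coordinate sum, and substitutes into Corollary~\ref{cor:Hamming_plotkin}.

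Your explicit transposition-bijection argument is slightly more careful than the paper's one-line invariance remark. You are also right that the left-hand side should read $r_H^{\triangle_T}$ and that the stray period should be removed.
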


\subsection{Monomial function}

Many-to-one mappings have wide applications in several areas, particularly in cryptography, finite geometry, coding theory, and combinatorial design theory \cite{ZDZYW, MQ, NLQL}. Monomial mappings induce structured many-to-one mappings that exhibit rich combinatorial and algebraic properties, making them particularly relevant for coding-theoretic analysis. Recent studies on many-to-one mappings over finite fields \cite{ZDZYW}, two-to-one mappings \cite{MQ}, and characterisations and constructions of $n$-to-$1$ mappings \cite{NLQL} further highlight the importance of algebraically structured mappings in finite field theory and coding theory, motivating their investigation in the context of FCCs. We first review the definitions of $m$-to-$1$ mapping and monomial functions.

\begin{defn}[$m$-to-$1$ mapping \cite{ZDZYW}]
	Let $\mathcal{A}$ be a finite set and  $m \in \mathbb{Z}$ with
	$1 \leq m \leq |\mathcal{A}|$. Write $|\mathcal{A}| = km + r,$ where $k,r \in \mathbb{Z}$ and $0 \leq r < m$. Let $f: \mathcal{A} \to \mathcal{B}$ be a mapping from $\mathcal{A}$ to another finite set $\mathcal{B}$. Then $f$ is called an $m$-to-$1$ mapping on $\mathcal{A}$ if there exist $k$ distinct elements in $\mathcal{B}$ such that each of these elements has exactly $m$ preimages in $\mathcal{A}$ under $f$.
	The remaining $r$ elements in $\mathcal{A}$ are called the exceptional elements of $f$ on $\mathcal{A}$, and the set of these $r$ exceptional elements is called the exceptional set of $f$ on $\mathcal{A}$, denoted by $E_f(\mathcal{A})$. In particular, $E_f(\mathcal{A})=\emptyset$ if and only if $r=0,$ that is, $m \mid |\mathcal{A}|.$ A polynomial $f(x) \in \mathbb{F}_q[x]$ is called \emph{$m$-to-1} over $\mathbb{F}_q$ if the induced mapping $f : \mathbb{F}_q \to \mathbb{F}_q$, defined by $c \mapsto f(c)$, is $m$-to-1 on $\mathbb{F}_q$.
\end{defn}

\begin{example}
	Let $f(x) = x^3 + x$. Then $f$ maps $0,1,2,3,4$ to $0,2,0,0,3$ in $\mathbb{F}_5$, respectively. Thus, $f$ is $3$-to-$1$ on $\mathbb{F}_5$, and the exceptional set $E_f(\mathbb{F}_5) = \{1,4\}$.   
\end{example}

\begin{defn}[monomial function \cite{ZDZYW}]
	The monomial function $f(x)=x^n$ with $n \in \mathbb{N}$ is $gcd(n, q-1)$-to-$1$ on $\mathbb{F}_q^{*}$, and $E_f(\mathbb{F}_q^{*}) = \emptyset$.
\end{defn}

Next, we obtain a simplified Plotkin-type bound on the optimal redundancy of FCCs for the monomial functions. We also characterise the conditions under which the monomial functions are bijective. We consider monomial functions of the form $f(x)=x^n$ over $\mathbb{F}_{q^k}$ and rewrite the definition of monomial function as the following lemma.
\begin{lem}
	Let $f : \mathbb{F}_{q^k} \to \mathrm{Im}(f)$ be the monomial function defined by $f(x)=x^n,$ where $n \in \mathbb{N}$, and let $\ell=\gcd(n,q^k-1).$ Then every nonzero element in $\mathrm{Im}(f)$ has exactly $\ell$ preimages in $\mathbb{F}_{q^k}^{\ast}$ and $|\mathrm{Im}(f)| = \frac{q^k-1}{\ell}+1.$
\end{lem}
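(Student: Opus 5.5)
The plan is to use the cyclic structure of the multiplicative group $\mathbb{F}_{q^k}^{\ast}$ and analyze $f(x)=x^n$ as a group homomorphism on it. Since $f(0)=0^n=0$ for $n\in\mathbb{N}$, and $x^n\neq 0$ for every $x\neq 0$ (a field has no zero divisors), the element $0$ is its own unique preimage. Hence $f$ maps $\mathbb{F}_{q^k}^{\ast}$ into itself, and
\[
\mathrm{Im}(f)=\{0\}\cup f(\mathbb{F}_{q^k}^{\ast}),
\]
with the union disjoint. It therefore suffices to show that the restriction $f|_{\mathbb{F}_{q^k}^{\ast}}$ is exactly $\ell$-to-$1$ onto its image.

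On $\mathbb{F}_{q^k}^{\ast}$, the map $x\mapsto x^n$ is a homomorphism of the abelian group $(\mathbb{F}_{q^k}^{\ast},\cdot)$, since $(xy)^n=x^ny^n$. Every nonzero image value therefore has the same number of preimages, namely $|\ker f|$, because each fiber is a coset of the kernel. The key step is to compute
\[
\ker f=\{x\in\mathbb{F}_{q^k}^{\ast}: x^n=1\}.
\]
Fix a generator $g$ of the cyclic group $\mathbb{F}_{q^k}^{\ast}$, which has order $q^k-1$. Writing $x=g^j$, the condition $x^n=1$ becomes $(q^k-1)\mid nj$. This is equivalent to $\frac{q^k-1}{\ell}\mid j$, because $\gcd\!\left(\frac{n}{\ell},\frac{q^k-1}{\ell}\right)=1$. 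Exactly $\ell$ values of $j\in\{0,\dots,q^k-2\}$ satisfy this, so $|\ker f|=\ell$. This establishes the first claim.

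For the cardinality, I apply the first isomorphism theorem (or simply count fibers): $|f(\mathbb{F}_{q^k}^{\ast})|=\frac{q^k-1}{\ell}$. Adding the value $0$ gives
\[
|\mathrm{Im}(f)|=\frac{q^k-1}{\ell}+1.
\]
Alternatively, the whole statement follows directly from the definition of monomial function quoted from \cite{ZDZYW}, applied with $q$ replaced by $q^k$, together with the observation about $0$. Since that definition requires $E_f(\mathbb{F}_{q^k}^{\ast})=\emptyset$, every element of $\mathbb{F}_{q^k}^{\ast}$ lies in some fiber of size $\ell$.

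The only real content is the kernel count $|\ker f|=\ell$. Its one delicate point is the coprimality step $\gcd\!\left(\frac{n}{\ell},\frac{q^k-1}{\ell}\right)=1$, which is what turns the divisibility condition on $nj$ into a condition on $j$ alone.
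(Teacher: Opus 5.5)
Your proof is correct, but it takes a different route from the paper, which does not actually prove this lemma. The paper presents the lemma as a restatement of the fact quoted from \cite{ZDZYW}: $x^n$ is $\gcd(n,q-1)$-to-$1$ on $\mathbb{F}_q^{\ast}$ with empty exceptional set, applied here over $\mathbb{F}_{q^k}$. That is exactly the alternative you mention in your closing remark.

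Your main argument instead derives the fact from scratch. You treat $x\mapsto x^n$ as an endomorphism of the cyclic group $\mathbb{F}_{q^k}^{\ast}$ and note that its fibers are cosets of the kernel. You then count the kernel via a generator $g$, using the coprimality of $n/\ell$ and $(q^k-1)/\ell$. This makes the result self-contained, and it also handles the value $0$ explicitly, which the paper leaves implicit when it adds $1$ to the image count.

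Your argument also covers more than this lemma. The coset observation is the content of the paper's next lemma, $f^{-1}(y)=x_0K$. Your kernel count proves the identity $|K|=\gcd(n,q^k-1)$, which the paper later uses in the monomial bound with no justification beyond the citation. The cost is only a few more lines than the paper's appeal to the literature.
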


The following lemma shows that the nonempty preimage sets of a monomial function are precisely the multiplicative cosets of its kernel.
\begin{lem}
	Let $f:\mathbb{F}_{q^k}\to \mathrm{Im}(f)$ where $f(x)=x^n$, and define the kernel as $K=\ker(f)=\{x\in \mathbb{F}_{q^k}^{\ast}:x^n=1\}$. Then $K$ is a multiplicative subgroup of $\mathbb{F}_{q^k}^{\ast}$, and each nonempty preimage set of $f$ is a multiplicative coset of $K$.
\end{lem}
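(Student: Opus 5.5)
We use only the fact that the multiplicative group $\mathbb{F}_{q^k}^{\ast}$ is finite; cyclicity is not needed. We first check that $K$ is a subgroup of $\mathbb{F}_{q^k}^{\ast}$. Clearly $1^n=1$, so $1\in K$ and $K$ is nonempty. If $x,y\in K$, then $(xy^{-1})^n=x^n(y^n)^{-1}=1$ because $\mathbb{F}_{q^k}$ is commutative, so $xy^{-1}\in K$. Hence $K$ is a subgroup by the one-step subgroup test.

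Next we describe the preimage sets, treating the value $0$ separately. Since $\mathbb{F}_{q^k}$ is a field, $x^n=0$ holds only for $x=0$. So the preimage of $0$ is $\{0\}$. This class is the only one not of coset type, and the statement is read as applying to the nonzero function values. Now fix a nonzero value $\beta\in\mathrm{Im}(f)$ and choose some $x_0\in\mathbb{F}_{q^k}^{\ast}$ with $x_0^n=\beta$. We show that $f^{-1}(\beta)=x_0K$.

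First, $x_0K\subseteq f^{-1}(\beta)$: for $z\in K$ we have $(x_0z)^n=x_0^nz^n=\beta$. Second, $f^{-1}(\beta)\subseteq x_0K$: if $x^n=\beta$, then $x\neq 0$ and $(x x_0^{-1})^n=\beta\beta^{-1}=1$, so $x x_0^{-1}\in K$, that is, $x\in x_0K$. This gives equality. Conversely, every coset $aK$ with $a\in\mathbb{F}_{q^k}^{\ast}$ equals the preimage of $a^n$. Hence the nonzero preimage sets are exactly the cosets of $K$.

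As a consistency check with the previous lemma, each coset has $|K|$ elements. By the previous lemma, every nonzero value has exactly $\ell$ preimages, so $|K|=\ell=\gcd(n,q^k-1)$. This can also be checked directly, since $\mathbb{F}_{q^k}^{\ast}$ is cyclic of order $q^k-1$.

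There is no real obstacle in this argument. The only point needing care is the zero class $\{0\}$, which is not a multiplicative coset and must be excluded from the claim. This same class will later contribute the term $n_i=1$ with zero internal distance in the bound of Corollary \ref{cor:Hamming_plotkin}.
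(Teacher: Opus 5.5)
Your argument is essentially the paper's: the one-step subgroup test via $(xy^{-1})^n=1$, followed by the double inclusion $f^{-1}(\beta)=x_0K$ for a chosen $x_0\in\mathbb{F}_{q^k}^{\ast}$ with $x_0^n=\beta$. You also handle the zero class explicitly, and this is worth keeping. The paper's proof picks such an $x_0$ for an arbitrary $y\in\mathrm{Im}(f)$, which is impossible for $y=0$, and $f^{-1}(0)=\{0\}$ is not a coset of $K$, so the lemma must be read for nonzero values exactly as you do.
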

\begin{IEEEproof}
	Let
	$K=\ker(f)=\{x\in\mathbb{F}_{q^k}^{\ast}:x^n=1\}$.
	For any $a,b\in K$, we have $a^n=b^n=1$. Hence, $(ab^{-1})^n=a^n(b^{-1})^n=1,$
	which implies $ab^{-1}\in K$ and hence $K$ is a subgroup of $\mathbb{F}_{q^k}^{\ast}$.
	
	Now let $y\in \operatorname{Im}(f)$ and choose $x_0\in\mathbb{F}_{q^k}^{\ast}$ such that
	$x_0^n=y$.
	If $x\in f^{-1}(y)$, then
	$x^n=x_0^n,$
	and therefore
	$(xx_0^{-1})^n=1.$
	Thus,
	$xx_0^{-1}\in K$,
	which implies that there exists $k\in K$ such that
	$x=x_0k$.
	Conversely, for any $k\in K$,
	$(x_0k)^n=x_0^nk^n=y.$
	Hence,
	$f^{-1}(y)=x_0K$.
	Therefore, each nonempty preimage set of $f$ is a multiplicative coset of $K$.
\end{IEEEproof}
From the definition of monomial functions and above lemma, we obtain the following corollary. 
\begin{cor}\label{cor_monomial}
	Let $f \colon \mathbb{F}_{q^k} \to \operatorname{Im}(f)$ be defined by $f(x)=x^n$, and let $\ell=\gcd(n,q^k-1)$ and $M=\frac{q^k-1}{\ell}$. If $\beta$ is a primitive element of $\mathbb{F}_{q^k}$, then the preimage sets of $f$ on $\mathbb{F}_{q^k}^{\ast}$ are given by 
	$$\mathcal{U}_{\beta^g} = \{\beta^{g+pM} : 0 \le p \le \ell-1\}, ~~~~~  0 \le g \le M-1.$$ Equivalently, $\mathcal{U}_{\beta^g} = \left\{ \beta^g, \beta^{g+M}, \beta^{g+2M}, \dots, \beta^{g+(\ell-1)M} \right\}.$
	Moreover, $\mathcal{U}_0=\{0\}.$ 
\end{cor}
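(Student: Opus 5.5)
The plan is to combine the two preceding lemmas with the cyclic structure of $\mathbb{F}_{q^k}^{\ast}$. Fix a primitive element $\beta$, so every nonzero element is uniquely $\beta^{m}$ with $0\le m\le q^k-2$.

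First I will identify the kernel explicitly. We have $K=\{\beta^m : \beta^{mn}=1\}=\{\beta^m : (q^k-1)\mid mn\}$. Writing $n=\ell n'$ and $q^k-1=\ell M$ with $\gcd(n',M)=1$, the condition $(q^k-1)\mid mn$ is equivalent to $M\mid mn'$. Since $\gcd(n',M)=1$, this is equivalent to $M\mid m$. Hence
\[
K=\{\beta^{pM}:0\le p\le \ell-1\},
\]
which has exactly $\ell$ elements, consistent with the first lemma.

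Next, by the second lemma, each nonempty preimage set of a nonzero value is a coset $x_0K$ for some $x_0=\beta^{g'}$. Reducing $g'$ modulo $M$ gives $g'=g+p_0M$ with $0\le g\le M-1$, so $x_0K=\beta^{g}K=\{\beta^{g+pM}:0\le p\le\ell-1\}$; absorbing $\beta^{p_0M}$ into $K$ is harmless because $K$ is a group. These $M$ cosets for $g=0,\dots,M-1$ are distinct, since the exponents $g+pM$ taken modulo $q^k-1$ are distinct residues. They have total size $M\ell=q^k-1$, so they partition $\mathbb{F}_{q^k}^{\ast}$. This matches $|\mathrm{Im}(f)|-1=M$ from the first lemma. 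The coset $\beta^gK$ is the fiber over $\beta^{gn}$, and I label it $\mathcal{U}_{\beta^g}$ by its representative, following the statement's indexing convention.

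Finally, $x^n=0$ forces $x=0$ in a field, and $0^n=0$ for $n\ge1$, so $\mathcal{U}_0=\{0\}$.

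The only genuinely delicate step is the gcd manipulation showing $K=\langle\beta^M\rangle$. The rest is bookkeeping about coset representatives. I would also note in passing that the label $\beta^g$ names the representative rather than the function value $\beta^{gn}$.
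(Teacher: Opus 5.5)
Your proposal is correct and follows essentially the route the paper takes. The paper gives no separate proof: it says the corollary follows from the $\ell$-to-$1$ property of $x^n$ on $\mathbb{F}_{q^k}^{\ast}$ together with the lemma that nonempty fibers are multiplicative cosets of $K$. The one difference is that you identify $K=\langle\beta^M\rangle$ directly by the $\gcd$ computation, whereas the paper leaves this implicit (it follows from $|K|=\ell$ and uniqueness of the order-$\ell$ subgroup of the cyclic group $\mathbb{F}_{q^k}^{\ast}$); your remark that $\beta^g$ labels the coset representative rather than the value $\beta^{gn}$ matches the paper's examples.
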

 The following two examples illustrate Corollary \ref{cor_monomial}.
\begin{example}
	\label{exm2}
	Consider the monomial mapping $f(x)=x^n$ over $\mathbb{F}_{4^2}$, where $|\mathbb{F}_{4^2}^{*}|=15$ and $\ell=\gcd(n,15)$. Let $\beta$ be a primitive element of $\mathbb{F}_{4^2}$, and represent each field element as $(a,b)\in\mathbb{F}_4^2$.
	
	\textbf{Case 1:} For $n=3$, $\ell=3$, so $f(x)=x^3$ is $3$-to-$1$. The $15$ nonzero elements form $5$ preimage sets of size $3$, yielding $|\mathrm{Im}(f)|=1+\frac{15}{3}=6$ with $\mathrm{Im}(f)=\{0,1,\beta^3,\beta^6,\beta^9,\beta^{12}\}$. The preimage sets are
	\begin{align*}
		\mathcal{U}_{0}
		&= \{(0,0)\} \mapsto 0,\\
		\mathcal{U}_{1}
		&= \{(1,0), (\alpha,0), (\alpha^2,0)\} 
		\leftrightarrow \{1, \beta^5, \beta^{10}\} \mapsto 1, \\[3pt]
		\mathcal{U}_{\beta}
		&= \{(0,1), (0,\alpha), (0,\alpha^2)\} \leftrightarrow \{\beta, \beta^6, \beta^{11}\} \mapsto \beta^3, \\[3pt]
		\mathcal{U}_{\beta^2}
		&= \{(\alpha,1), (\alpha^2,\alpha), (1,\alpha^2)\} \leftrightarrow \{\beta^2, \beta^7, \beta^{12}\} \mapsto \beta^6, \\[3pt]
		\mathcal{U}_{\beta^3}
		&= \{(\alpha,\alpha^2), (\alpha^2,1), (1,\alpha)\}  \leftrightarrow \{\beta^3, \beta^8, \beta^{13}\} \mapsto \beta^9, \\[3pt]
		\mathcal{U}_{\beta^4}
		&= \{(1,1), (\alpha,\alpha), (\alpha^2,\alpha^2)\}  \leftrightarrow \{\beta^4, \beta^9, \beta^{14}\} \mapsto \beta^{12}.
	\end{align*}
	\textbf{Case 2:} For $n=5$, $\ell=5$, so $f(x)=x^5$ is $5$-to-$1$. The $15$ nonzero elements form $3$ preimage sets of size $5$, yielding $|\mathrm{Im}(f)|=1+\frac{15}{5}=4$ with $\mathrm{Im}(f)=\{0,1,\beta^5,\beta^{10}\}$. The  preimage sets are
	\begin{align*}
		\mathcal{U}_{0}
		&= \{(0,0)\} \mapsto 0,\\
		\mathcal{U}_{1}
		&= \{(1,0), (\alpha,\alpha^2), (0,\alpha), (\alpha,\alpha), (1,\alpha^2)\} \leftrightarrow \{1, \beta^3, \beta^6, \beta^9, \beta^{12}\} \mapsto 1, \\[4pt]
		\mathcal{U}_{\beta}
		&= \{(0,1), (1,1), (\alpha^2,\alpha), (\alpha^2,0), (1,\alpha)\} \leftrightarrow \{\beta, \beta^4, \beta^7, \beta^{10}, \beta^{13}\} \mapsto \beta^5, \\[4pt]
		\mathcal{U}_{\beta^{2}}
		&= \{(\alpha,1), (\alpha,0), (\alpha^2,1), (0,\alpha^2), (\alpha^2,\alpha^2)\}  \leftrightarrow \{\beta^2, \beta^5, \beta^8, \beta^{11}, \beta^{14}\} \mapsto \beta^{10}.
	\end{align*}
\end{example}

\begin{example}
	\label{exm3}
	Consider the monomial mapping $f(x)=x^n$ over $\mathbb{F}_{3^2}$, where $|\mathbb{F}_{3^2}^{*}|=8$ and $\ell=\gcd(n,8)$. Let $\beta$ be a primitive element of $\mathbb{F}_{3^2}$, represent each field element as $(a,b)\in\mathbb{F}_3^2$.
	Both cases yield the same image set and preimage partition; only the image assignments to the preimage sets differ.
	
	\textbf{Case 1:} For $f(x)=x^2$, $\ell=\gcd(2,8)=2$. The preimage sets are
	\begin{align*} \mathcal{U}_{0} &= \{(0,0)\} \mapsto 0,\\ \mathcal{U}_{1} &= \{(1,0), (2,0)\} \leftrightarrow \{1,\beta^4\} \mapsto 1, \\ \mathcal{U}_{\beta} &= \{(0,1), (0,2)\} \leftrightarrow \{\beta,\beta^5\} \mapsto \beta^{2}, \\ \mathcal{U}_{\beta^{2}} &= \{(1,2), (2,1)\} \leftrightarrow \{\beta^2,\beta^6\} \mapsto \beta^{4}, \\ \mathcal{U}_{\beta^{3}} &= \{(2,2), (1,1)\} \leftrightarrow \{\beta^3,\beta^7\} \mapsto \beta^{6}. \end{align*}
	
	\textbf{Case 2:} For $f(x)=x^6$, $\ell=\gcd(6,8)=2$. The preimage sets are
	\begin{align*} \mathcal{U}_{0} &= \{(0,0)\} \mapsto 0,\\ \mathcal{U}_{1} &= \{(1,0), (2,0)\} \leftrightarrow \{1,\beta^4\} \mapsto 1, \\ \mathcal{U}_{\beta} &= \{(2,2), (1,1)\} \leftrightarrow \{\beta^3,\beta^7\} \mapsto \beta^{2}, \\ \mathcal{U}_{\beta^{2}} &= \{(1,2), (2,1)\} \leftrightarrow \{\beta^2,\beta^6\} \mapsto \beta^{4}, \\ \mathcal{U}_{\beta^{3}} &= \{(0,1), (0,2)\} \leftrightarrow \{\beta,\beta^5\} \mapsto \beta^{6}. \end{align*}

\end{example}

The following corollary states the simplified Plotkin-type bound on the optimal redundancy of FCCs for the monomial functions.

\begin{cor}
	Let $f(x)=x^n$ be a monomial function over $\mathbb{F}_{q^k}$, and let $\ell=\gcd(n,q^k-1)$. Then, the optimal redundancy $r_H^{m}(q,k,t)$ for the monomial function satisfies
	\begin{equation}\label{eq:plotkin_mon}
		r_H^{m}(q,k,t) \ge \frac{(2t+1)(q^k-1)(q^k-\ell+1)+(\ell-1)k(q-1)q^{k-1}}{q^{2k-1}(q-1)} - k.
	\end{equation}	
\end{cor}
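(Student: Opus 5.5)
The plan is to apply Corollary~\ref{cor:Hamming_plotkin} to the partition of $\mathbb{F}_{q^k}\cong\mathbb{F}_q^k$ given by Corollary~\ref{cor_monomial}. We identify each field element with its coordinate vector over a fixed $\mathbb{F}_q$-basis. This identification is $\mathbb{F}_q$-linear, so $d_H(x,y)=w_H(x-y)$, where subtraction is done in the field.

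First we handle the cardinality term. The preimage sets are $\mathcal{U}_0=\{0\}$ and the $M=\frac{q^k-1}{\ell}$ cosets $\mathcal{U}_{\beta^g}=\beta^gK$, each of size $\ell$. Hence
\[
\sum_i n_i^2 = 1+M\ell^2 = 1+(q^k-1)\ell,
\]
and therefore
\[
q^{2k}-\sum_i n_i^2=(q^k-1)(q^k+1)-(q^k-1)\ell=(q^k-1)(q^k-\ell+1).
\]
This gives the first term of the numerator in \eqref{eq:plotkin_mon}.

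Next we compute the within-class distance sum. The class $\{0\}$ contributes nothing. For a coset $x_0K$, the within-class sum is $\sum_{a,b\in K} d_H(x_0a,x_0b)$.

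\begin{itemize}
\item \emph{Averaging over representatives.} The quantity $\sum_{a,b\in K} d_H(xa,xb)$ depends only on the coset $xK$, since replacing $x$ by $xc$ with $c\in K$ merely permutes $K$. As $x$ runs over $\mathbb{F}_{q^k}^{\ast}$, each coset is hit exactly $\ell$ times. So the total over all classes equals
\[
\frac{1}{\ell}\sum_{a,b\in K}\ \sum_{x\in\mathbb{F}_{q^k}^{\ast}} w_H\bigl(x(a-b)\bigr).
\]
\item \emph{Evaluating the inner sum.} For $a=b$ the inner sum is zero. For $a\neq b$ the map $x\mapsto x(a-b)$ is a bijection of $\mathbb{F}_{q^k}^{\ast}$, so the inner sum equals $\sum_{\mathbf{z}\in\mathbb{F}_q^k} w_H(\mathbf{z})$. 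Each coordinate is nonzero for exactly $(q-1)q^{k-1}$ vectors, so this sum is $k(q-1)q^{k-1}$.
\item \emph{Combining.} There are $\ell(\ell-1)$ ordered pairs $a\neq b$ in $K$, so the total within-class sum is
\[
\frac{1}{\ell}\,\ell(\ell-1)\,k(q-1)q^{k-1}=(\ell-1)k(q-1)q^{k-1}.
\]
\end{itemize}

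Finally, we substitute both quantities into \eqref{eq:Hamming_plotkin}. The denominator there is $(q-1)q^{2k-1}$, and this yields \eqref{eq:plotkin_mon} directly.

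The only nontrivial step is the within-class computation. Individual cosets have distance sums that depend on $g$, so they cannot be computed one at a time in closed form. The averaging trick over all $x\in\mathbb{F}_{q^k}^{\ast}$ is what removes this dependence. The one point to check is that additive differences in the field correspond to vector differences, which holds because the basis identification is $\mathbb{F}_q$-linear.
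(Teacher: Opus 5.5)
Your proof is correct and takes essentially the same route as the paper. Both compute $\sum_i n_i^2=1+\ell(q^k-1)$ and reduce the within-class sum to $\sum_{z\in\mathbb{F}_{q^k}^{\ast}}w_H(z)=k(q-1)q^{k-1}$, using that multiplication by a nonzero field element is a bijection of $\mathbb{F}_{q^k}^{\ast}$. The only cosmetic difference is in how ordered pairs within a coset are parametrized: the paper uses the bijection $(v\omega,v)$ with $v\in\mathbb{F}_{q^k}^{\ast}$ and $\omega\in K$, whereas you write them as $(xa,xb)$ and divide out the $\ell$-fold overcount.
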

\begin{IEEEproof}	
	For the monomial function $f(x)=x^n$ over $\mathbb{F}_{q^k}$, the zero preimage set is the singleton $\mathcal{U}_0=\{0\}$, and every nonzero preimage set has cardinality $\ell$. Furthermore, if $\beta$ is a primitive element of $\mathbb F_{q^k}$, the distinct nonzero preimage sets are given by Corollary \ref{cor_monomial}. Let $M=\frac{q^k-1}{\ell}$ be the number of nonzero preimage sets and the kernel of the mapping be $K = \ker(f) = \{\omega \in \mathbb{F}_{q^k}^{\ast}: \omega^n = 1\}$. Therefore, we have $\ell = |K| = \gcd(n, q^k-1)$. Since $\mathcal{U}_0=\{0\}$ has cardinality $1$, and each nonzero preimage set $\mathcal{U}_{\beta^g}$ for $0\le g\le M-1$ has cardinality $\ell$, the sum of the squares of the cardinalities of the preimage sets is given by
	\begin{equation}\label{eq:sum_squares_mon}
		\sum_{i} n_i^2 = 1 +M\ell^2 =1 + \frac{q^k-1}{\ell} \ell^2 =1+ \ell(q^k-1) .
	\end{equation}
	
	Now we proceed to show that the sum of pairwise distances among vectors within each preimage set $\displaystyle \sum_{i=1}^{E} \sum_{\mathbf{u},\mathbf{v} \in \mathcal{U}_{\alpha_{i}} } d_{H}(\mathbf{u}, \mathbf{v}) = (\ell-1)k(q-1)q^{k-1}$. Since $\mathcal{U}_0 = \{0\}$ contains only one element, its intra-set distance is zero, allowing us to restrict the total distance sum exclusively to the nonzero classes from $g=0$ to $M-1$. Thus, $\displaystyle \sum_{i=1}^{E} \sum_{\mathbf{u},\mathbf{v} \in \mathcal{U}_{\alpha_{i}} } d_{H}(\mathbf{u}, \mathbf{v})=\sum_{g=0}^{M-1}\sum_{\mathbf{u},\mathbf{v}\in \mathcal{U}_{\beta^g}} d_H(\mathbf{u},\mathbf{v})$. Using the identity $d_H(\mathbf{u},\mathbf{v})=w_H(\mathbf{u}-\mathbf{v})$ and writing each element of $\mathcal{U}_{\beta^g}$ as $\beta^g\omega$, we obtain $\displaystyle \sum_{g=0}^{M-1}\sum_{\mathbf{u},\mathbf{v}\in \mathcal{U}_{\beta^g}} d_H(\mathbf{u},\mathbf{v})=\sum_{\omega\in K}\sum_{a=0}^{q^k-2}w_H\bigl(\beta^a(\omega-1)\bigr)=\sum_{\substack{\omega\in K\\ \omega\neq1}}\sum_{a=0}^{q^k-2} w_H\bigl(\beta^a(\omega-1)\bigr)$. 
	
	The last equality follows since the term corresponding to $\omega=1$ contributes zero. For any fixed $\omega\neq1$, multiplication by $\omega-1$ is a bijection on $\mathbb F_{q^k}^{*}$, and hence $\displaystyle \sum_{a=0}^{q^k-2}w_H\bigl(\beta^a(\omega-1)\bigr)=\sum_{z\in\mathbb F_{q^k}^{*}} w_H(z)$. Because $|K|=\ell$ and the total weight of the non-zero field elements evaluates to $\displaystyle \sum_{z\in\mathbb F_{q^k}^{*}} w_H(z)=k(q-1)q^{k-1}$. Therefore, we have 
	\begin{equation}\label{eq:sum_squares_mon2}
		\displaystyle \sum_{i=1}^{E} \sum_{\mathbf{u},\mathbf{v} \in \mathcal{U}_{\alpha_{i}} } d_{H}(\mathbf{u}, \mathbf{v}) = (\ell-1)k(q-1)q^{k-1}.
	\end{equation}
	
	 Substituting \eqref{eq:sum_squares_mon} and \eqref{eq:sum_squares_mon2} into \eqref{eq:Hamming_plotkin}, we obain
	 \begin{align}
	 	r_H^{m}(q,k,t) & \ge \frac{1}{(q-1)q^{2k-1}} \left[ (2t+1) \left(q^{2k}- (1+ \ell(q^k-1)) \right)+ (\ell-1)k(q-1)q^{k-1} \right] - k \nonumber\\
	 	& = \frac{1}{(q-1)q^{2k-1}} \left[ (2t+1) \left(q^{2k}- 1- \ell(q^k-1) \right)+ (\ell-1)k(q-1)q^{k-1} \right] - k \nonumber\\
	 	& = \frac{1}{(q-1)q^{2k-1}} \left[ (2t+1) \left((q^k-1)(q^k+1)-\ell(q^k-1) \right)+ (\ell-1)k(q-1)q^{k-1} \right] - k \nonumber\\
	 	& = \frac{1}{(q-1)q^{2k-1}} \left[ (2t+1)(q^k-1)(q^k+1-\ell)+ (\ell-1)k(q-1)q^{k-1} \right] - k \nonumber.
	 \end{align}
\end{IEEEproof}

\begin{example}

For the monomial mapping $f(x)=x^5$ over $\mathbb{F}_{2^4}$, we have
$\ell=\gcd(5,16-1)=\gcd(5,15)=5=n$. Hence, the $q^k-1=15$ nonzero elements are partitioned into $M=\frac{q^k-1}{\ell}=3$ preimage sets, each of size $\ell=5$. The term $S = \sum_{g=0}^{M-1} \sum_{\mathbf{u},\mathbf{v} \in\mathcal{U}_{\beta^g}} d_H(\mathbf{u},\mathbf{v})$ represents the total Hamming distance within these preimage sets. One of the preimage sets is given by
$\mathcal{U}_{1}=\{1000, 0001, 0011, 0101, 1111\}$. The total Hamming distance obtained by summing, for each vector, its distance to all other vectors in the set is $44$. Repeating this computation for all three disjoint preimage sets yields $S=128$. Substituting this value into the redundancy bound in \eqref{eq:plotkin_mon} for t=1, we obtain $r\ge2$.

\end{example}

\section{Simplified Plotkin-type Bounds for functions under the Lee distance}\label{sec_Plotkin_Lee}	
In this section, we simplify the Plotkin-type bound for FCCs with the Lee distance in Corollary \ref{cor:Lee_plotkin} for several classes of functions by explicitly determining the sizes of the induced partition classes and the pairwise distances among vectors within each class. We consider the Lee weight function, the Lee weight distribution function and the modular sum function. 

\subsection{Lee weight function}
The Lee weight of a vector quantifies its total deviation from the zero vector under the Lee metric and serves as a natural measure of error magnitude. FCCs for the Lee weight function were previously studied in \cite{GA} and \cite{HRS}. The Lee weight function is given by  $f(\mathbf{u}) = w_L(\mathbf{u})$, where $\mathbf{u} \in \mathbb{Z}_q^k$. The expressiveness of this function is $E=k \left\lfloor \frac{q}{2} \right\rfloor + 1$, and $Im(f)=\{0,1,...,k \left\lfloor \frac{q}{2} \right\rfloor - 1, k \left\lfloor \frac{q}{2} \right\rfloor\}$.  For each $ i \in [0:k \left\lfloor \frac{q}{2} \right\rfloor ]$, we have $\mathcal{U}_{i} \triangleq \{\mathbf{u} \in \mathbb{F}_q^k : w_L(\mathbf{u})=i \}$. Next, we obtain an expression for $n_i=|\mathcal{U}_{i}|$. 

Let $p=\left\lfloor \frac{q}{2}\right\rfloor$. For each $j \in [p]$, let $w_j$ denote the number of coordinates having Lee weight $j$ in a given vector. Then a vector of Lee weight $i$ satisfies $\displaystyle \sum_{j=1}^{p} jw_j=i$. Furthermore, the number of coordinates having Lee weight $0$ is $\left(k-\displaystyle \sum_{j=1}^{p}w_j\right)$. Hence, for a fixed tuple $(w_1,\ldots,w_p)$ satisfying $\displaystyle \sum_{j=1}^{p} jw_j=i$, there are $\frac{k!}{w_1!\cdots w_p!\left(k-\sum_{j=1}^{p}w_j\right)!}$ ways to distribute the counts $w_1,\ldots,w_p$ among the $k$ positions. 

For odd $q$, each nonzero Lee weight $j \in [p]$ is attained by exactly two symbols in
$\mathbb{Z}_q$, namely $j$ and $q-j$. Hence, for a fixed tuple $(w_1,\ldots,w_p)$, there are $2^{\sum_{j=1}^{p}w_j}$ ways to assign symbols to the nonzero coordinates. Summing over all admissible tuples $(w_1,\ldots,w_p)$, we obtain $$n_i = \displaystyle
\sum_{\substack{(w_1,\ldots,w_p) : \\
		w_1+2w_2+\cdots+pw_p=i
}}
\frac{k!\,2^{\sum_{j=1}^{p}w_j}}
{w_1!\cdots w_p!
	\left(k-\sum_{j=1}^{p}w_j\right)!}. $$

For even $q$, every Lee weight $j\in [p-1]$ is attained by exactly the two symbols $j$ and $q-j$, where as the Lee weight $p=q/2$ is attained by the unique symbol $q/2$. Consequently, for a fixed tuple $(w_1,\ldots,w_p)$, there are $2^{\sum_{j=1}^{p-1}w_j}$ ways to assign symbols to the nonzero coordinates. Therefore, we obtain $$n_i = \displaystyle
\sum_{\substack{(w_1,\ldots,w_p ): \\
		w_1+2w_2+\cdots+pw_p=i
}}
\frac{k!\,2^{\sum_{j=1}^{p-1}w_j}}
{w_1!\cdots w_p!
	\left(k-\sum_{j=1}^{p}w_j\right)!}. $$

Thus, for the Lee weight function we have 
\begin{equation}
	\label{eq:ni_lee}
	n_i=
	\begin{cases}
		\displaystyle
		\sum_{\substack{(w_1,\ldots,w_p) : \\
				w_1+2w_2+\cdots+pw_p=i
		}}
		\frac{k!\,2^{\sum_{j=1}^{p}w_j}}
		{w_1!\cdots w_p!
			\left(k-\sum_{j=1}^{p}w_j\right)!},
		\hfill \text{ for odd } q,
		\\[3ex]
		\displaystyle
		\sum_{\substack{(w_1,\ldots,w_p) : \\
				w_1+2w_2+\cdots+pw_p=i
		}}
		\frac{k!\,2^{\sum_{j=1}^{p-1}w_j}}
		{w_1!\cdots w_p!
			\left(k-\sum_{j=1}^{p}w_j\right)!},
		\hfill \text{ for even } q.
	\end{cases}
\end{equation}

The expression for $n_i$ can be simplified for small alphabet sizes. In particular, explicit expressions for $q=3$ and $q=4$ are given below.

\noindent $\bullet$ $q=3:$ For $q=3$, we have $p=1$. Substituting $p=1$ in (\ref{eq:ni_lee}) for odd $q$, we obtain $n_i = \frac{k! 2^i}{i!(k-i)!}=2^i\binom{k}{i}$. \\
\noindent $\bullet$ $q=4:$ For $q=4$, we have $p=2$. Substituting $p=2$ in (\ref{eq:ni_lee}) for even $q$, we obtain $n_i = \displaystyle \sum_{\substack{(w_1,w_2) : w_1+2w_2=i }} \frac{k!2^{w_1}}{w_1!w_2!(k-w_1-w_2)!} $. Setting $w_2=j$, we have $w_1=i-2j$. Substituting these parameters, we obtain $n_i = \displaystyle \sum_{j=0}^{\lfloor \frac{i}{2} \rfloor}  \frac{k!2^{i-2j}}{(i-2j)!j!(k-i+j)!} $. The right-hand side expression is the coefficient of $x^i$ in the multinomial expansion of $(1+2x+x^2)^k$. Since $(1+2x+x^2)^k=(1+x)^{2k}$, from the binomial expansion of $(1+x)^{2k}$, we obtain the coefficient of $x^i$ as $\binom{2k}{i}$. Therefore, we have $n_i = \displaystyle \sum_{j=0}^{\lfloor \frac{i}{2} \rfloor}  \frac{k!2^{i-2j}}{(i-2j)!j!(k-i+j)!}$ $= \binom{2k}{i} $.
Thus, $n_i=\binom{k}{i}, 2^i\binom{k}{i}, \binom{2k}{i}$ for $q= 2, 3,$ and $4$, respectively.
The following example illustrates the computation of the cardinalities of the preimage sets using (\ref{eq:ni_lee}).
\begin{example} \label{ex1}
	Consider the Lee weight function on $\mathbb{Z}_5^2$. Since
	$p=\left\lfloor \frac{5}{2}\right\rfloor=2$, the odd $q$ expression in
	(\ref{eq:ni_lee}) gives 
	\[ 
	n_i=
	\sum_{\substack{(w_1, w_2):\\
			w_1+2w_2=i\\
	}}
	\frac{2!\,2^{w_1+w_2}}
	{w_1!\,w_2!\,(2-w_1-w_2)!},
	\qquad 0\le i\le 4.
	\]
	Evaluating the above expression gives
	$
	(n_0,n_1,n_2,n_3,n_4)=(1,4,8,8,4).
	$
	Hence, the partition induced by the Lee weight function consists of the preimage sets
	$
	\mathcal U_i=\{\mathbf u\in\mathbb Z_5^2:\mathrm{w}_L(\mathbf u)=i\},
	$
	$i\in\{0,1,\ldots,4\}$, with cardinalities $1,4,8,8$, and $4$, respectively. Moreover,
	$
	\sum_{w=0}^{4}n_i=25=|\mathbb Z_5^2|,
	$
	as expected.
\end{example}
Since $\allowbreak d_L(\mathbf u,\mathbf v) = \sum_{j=1}^{k} d_L(u_j,v_j)$, we have $\sum_{\mathbf u,\mathbf v\in\mathcal U_i} d_L(\mathbf u,\mathbf v) = \sum_{j=1}^{k} \sum_{\mathbf u,\mathbf v\in\mathcal U_i} d_L(u_j,v_j)$.

The Lee weight of a vector depends only on the sum of the Lee weights of its coordinates. Therefore,  the set $\mathcal U_i$ is invariant under coordinate permutations. Fix $j\in[k]$, and let $\pi$ be the permutation of $[k]$ defined by 
\[\small \pi(l)=
\begin{cases}
	\begin{aligned}
		&j, \text{ for }l = 1,  \\ 
		&1, \text{ for }l = j,  \\
		&l, \text{ for }l = [k] \setminus \{1, j\}. 
	\end{aligned}
\end{cases}
\] and $\pi(\mathbf u))$ is defined by,  $(\pi(\mathbf u))_j = u_{\pi(j)}$ $\text{ for all } j \in[k]$. For each $\alpha\in\mathbb Z_q$, define $A_1(\alpha)=\{\mathbf u\in\mathcal U_i:u_1=\alpha\},\,A_j(\alpha)=\{\mathbf u\in\mathcal U_i:u_j=\alpha\}$.

Define $\phi:A_1(\alpha)\to A_j(\alpha)$ by
$\phi(\mathbf u)=\pi(\mathbf u)$. Since
$(\pi(\mathbf u))_j=u_{\pi(j)}=u_1=\alpha$,
$\phi$ maps $A_1(\alpha)$ into $A_j(\alpha)$. As $\pi$ is invertible, $\phi$ is injective. For any $\mathbf v\in A_j(\alpha)$, invariance of $\mathcal U_w$ under coordinate permutations implies $\mathbf u=\pi^{-1}(\mathbf v)\in\mathcal U_w$, and
$
u_1=(\pi^{-1}(\mathbf v))_1=v_{\pi^{-1}(1)}=v_j=\alpha.
$
Thus $\mathbf u\in A_1(\alpha)$ and $\phi(\mathbf u)=\mathbf v$, proving surjectivity. Therefore $\phi$ is a bijection, and
$|A_1(\alpha)|=|A_j(\alpha)|$ for all $\alpha\in\mathbb Z_q$ and $j\in[k]$.
Thus the $j$th coordinate column is a permutation of the first for all $j\in[k]$. Hence,
$
\sum_{\mathbf u,\mathbf v\in\mathcal U_i}
d_L(u_j,v_j)
=
\sum_{\mathbf u,\mathbf v\in\mathcal U_i}
d_L(u_1,v_1),
\quad \forall\, j\in[k].
$	
Therefore, 
\begin{equation} \label{eq:sum_distance_simp} 
	\sum_{\mathbf u,\mathbf v\in\mathcal U_i}
	d_L(\mathbf u,\mathbf v)
	=
	k
	\sum_{\mathbf u,\mathbf v\in\mathcal U_i}
	d_L(u_1,v_1).
\end{equation}

Summing over all $i\in\{0,1,\ldots,kp\}$ yields the following simplification
\begin{equation} 
	\label{eq:lee_weight_symmetry}
	\sum_{i=0}^{kp}
	\sum_{\mathbf u,\mathbf v\in\mathcal U_i}
	d_L(\mathbf u,\mathbf v)
	=
	k
	\sum_{i=0}^{kp}
	\sum_{\mathbf u,\mathbf v\in\mathcal U_i}
	d_L(u_1,v_1).
\end{equation}
We next illustrate (\ref{eq:lee_weight_symmetry}) for the Lee weight function.
\begin{example}
	Consider the Lee weight function on $\mathbb{Z}_4^2$. The preimage set corresponding to Lee weight $2$ is 
	$\mathcal{U}_2=
	\{
	(2,0),(0,2),
	(1,1),(1,3),
	(3,1),(3,3)
	\}$, so that $|\mathcal{U}_2|=6$, which agrees with $n_i=\binom{2k}{i}$, since $n_2=\binom{4}{2}=6$. Direct computation gives $\sum_{\mathbf{u},\mathbf{v}\in\mathcal{U}_2}
	d_L(u_1,v_1)=36$.
	Since $k=2$, (\ref{eq:sum_distance_simp}) yields $\sum_{\mathbf{u},\mathbf{v}\in\mathcal{U}_2}
	d_L(\mathbf{u},\mathbf{v})
	=
	2\sum_{\mathbf{u},\mathbf{v}\in\mathcal{U}_2}
	d_L(u_1,v_1)
	=72$,
	which agrees with direct computation. This verifies (\ref{eq:sum_distance_simp}) for $\mathcal{U}_2$, and hence (\ref{eq:lee_weight_symmetry}) for $q=4$ and $k=2$.
\end{example}
Equations \eqref{eq:ni_lee} and \eqref{eq:lee_weight_symmetry} provide the values of $n_
i$ and the corresponding sums of pairwise Lee distances, respectively. Substituting these quantities into Corollary \ref{cor:Lee_plotkin} gives the following corollary.
\begin{cor}\label{cor:plotkin_lee_weight}
	The optimal redundancy $r_L^{\mathrm{w}_L}(q,k,t)$ for the Lee weight function satisfies
	\begin{equation}
		\label{eq:plotkin_lee_weight}
			r_L^{\mathrm{w}_L}(q,k,t)
			\ge 
			\frac{1}{Sq^{2k-1}}
			\left[
			(2t+1)
			\left(
			q^{2k}
			-\sum_{i=0}^{kp} n_i^2
			\right)
			+
			k
			\!\sum_{i=0}^{kp}
			\sum_{\mathbf{u},\mathbf{v}\in\mathcal{U}_i}
			\!\!\! d_L(u_1,v_1)
			\right]
			-k,
	\end{equation}
	where $n_i$ is given by (\ref{eq:ni_lee}) and $p=\left\lfloor \frac{q}{2}\right\rfloor$.
\end{cor}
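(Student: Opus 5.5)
This is a direct specialization of Corollary \ref{cor:Lee_plotkin} to $f=w_L$. The proof only needs to assemble the pieces already established for the Lee weight partition.

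The plan has three steps.

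\textbf{Step 1: identify the partition.} The image of $w_L$ on $\mathbb{Z}_q^k$ is $\{0,1,\ldots,kp\}$ with $p=\lfloor q/2\rfloor$, so $E=kp+1$. The preimage sets are the weight classes $\mathcal{U}_i$, and their cardinalities $n_i$ are given by (\ref{eq:ni_lee}). To justify (\ref{eq:ni_lee}), I would repeat the counting argument: group the vectors of weight $i$ by the profile $(w_1,\ldots,w_p)$, where $w_j$ is the number of coordinates of Lee weight $j$. Choosing which coordinates realize each profile gives the multinomial coefficient. Choosing the actual symbols gives a factor of $2$ for each nonzero coordinate of Lee weight $j<q/2$, and a factor of $1$ for the symbol $q/2$ when $q$ is even. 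Hence $\sum_i n_i^2$ is the quantity that appears in the first term.

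\textbf{Step 2: reduce the intra-class distance sums.} For each class I would invoke (\ref{eq:sum_distance_simp}). Its proof uses only two facts. First, $d_L$ is additive over coordinates. Second, $\mathcal{U}_i$ is invariant under coordinate permutations, so the transposition $(1\,j)$ gives a bijection between $\{\mathbf{u}\in\mathcal{U}_i:u_1=\alpha\}$ and $\{\mathbf{u}\in\mathcal{U}_i:u_j=\alpha\}$. Therefore, for every $j$, the multiset of $j$-th coordinates coincides with the multiset of first coordinates, and $\sum_{\mathbf{u},\mathbf{v}\in\mathcal{U}_i}d_L(u_j,v_j)$ does not depend on $j$. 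Summing over $i$ gives (\ref{eq:lee_weight_symmetry}).

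\textbf{Step 3: substitute.} I would plug $\sum_i n_i^2$ and (\ref{eq:lee_weight_symmetry}) into Corollary \ref{cor:Lee_plotkin}, with $|\mathcal{X}|=q$ and $S=S_L$ from (\ref{eq:lee}). This yields exactly the stated inequality, with the overall factor $1/(S_Lq^{2k-1})$ and the trailing $-k$ unchanged.

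The only point needing care is Step 2. One must note that the coordinate-permutation argument equates \emph{column multisets}, not merely column sums. Equal multisets are what make the ordered-pair sums $\sum d_L(u_j,v_j)$ equal, since each such sum is $\sum_{\alpha,\beta}|A_j(\alpha)||A_j(\beta)|\,d_L(\alpha,\beta)$. Once this is stated, the rest is routine substitution, and no inequality beyond Theorem \ref{thm:general_plotkin} is required.
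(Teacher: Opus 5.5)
Your proposal is correct and follows the paper's own argument. The paper likewise counts $n_i$ by profiles $(w_1,\ldots,w_p)$, uses the transposition bijection $A_1(\alpha)\to A_j(\alpha)$ to obtain \eqref{eq:sum_distance_simp} and \eqref{eq:lee_weight_symmetry}, and then substitutes directly into Corollary~\ref{cor:Lee_plotkin} with $S=S_L$; your remark that equal column multisets, not just equal column sums, force the ordered-pair sums to agree is exactly what the paper means when it says each coordinate column is a permutation of the first.
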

A simplified Plotkin-type bound on the optimal redundancy of FCCs for the Lee weight function was derived in \cite{HRS} for $q \ge 5$ and $t=\lfloor \frac{q-3}{2} \rfloor$, while another lower bound was obtained in \cite{GA} for $k > \lceil \frac{t+1}{\lfloor \frac{q}{2} \rfloor} \rceil$. In contrast, the proposed bound in Corollary \ref{cor:plotkin_lee_weight} is applicable in all parameter regimes. 
\subsection{Lee weight distribution function}
The Lee weight distribution function is a fundamental function that captures the weight distribution of codewords.

\begin{defn}[Lee Weight Distribution Function \cite{GA}]
	The Lee weight distribution function is defined as  $f(\boldsymbol{u}) = \Delta_T(\boldsymbol{u}) \triangleq \left\lfloor \frac{\mathrm{w_L}(\boldsymbol{u})}{T} \right\rfloor$, where  $\boldsymbol{u} \in \mathbb{Z}_q^k$ and $T, k \in \mathbb{N}$. 
\end{defn}
The expressiveness of the Lee weight distribution function is
$E=\left\lceil \frac{k\left\lfloor \frac{q}{2} \right\rfloor+1}{T} \right\rceil$.
For each $j\in\{0,1,\ldots,E-1\}$,
define $\mathcal{V}_j
=\bigcup_{i=jT}^{\min\{(j+1)T-1,kp\}} \mathcal{U}_i$, where $\mathcal{U}_i
= \{\mathbf{u}\in\mathbb{Z}_q^k: \mathrm{w}_L(\mathbf{u})=i\}$ and $p=\lfloor \frac q2 \rfloor$. Since the sets $\{\mathcal{U}_i\}_{i=0}^{kp}$ are pairwise disjoint,
\begin{equation} \label{eq:card_preim_LWD} 
	N_j=|\mathcal{V}_j|=\sum_{i=jT}^{\min\{(j+1)T-1,kp\}}|\mathcal{U}_i|
	=\sum_{i=jT}^{\min\{(j+1)T-1,kp\}} n_i,
\end{equation}
where $n_i$ is given by (\ref{eq:ni_lee}). The following example computes the cardinalities of the preimage sets for the Lee weight distribution function.
\begin{example}
	Consider the Lee weight distribution function on $\mathbb Z_5^2$
	with $T=2$. Using the preimage cardinalities
	$(n_0,n_1,n_2,n_3,n_4)
	=
	(1,4,8,8,4)$
	computed in Example~\ref{ex1}, (\ref{eq:card_preim_LWD})
	gives $N_0=n_0+n_1=5, \,
	N_1=n_2+n_3=16,\, \text{ and }
	N_2=n_4=4$, respectively.
	Hence the induced partition sets have cardinalities $(N_0,N_1,N_2)=(5,16,4)$.
\end{example}

Fix $j\in\{0,1,\ldots,E-1\}$. Let $\{\mathcal{V}_j\}_{j=0}^{E-1}$ be the partition induced by the Lee weight distribution function. Since
$
d_L(\mathbf{u},\mathbf{v})
=
\sum_{l=1}^{k} d_L(u_l,v_l),
$
we have
$
\sum_{\mathbf{u},\mathbf{v}\in\mathcal{V}_j}
d_L(\mathbf{u},\mathbf{v})
=
\sum_{l=1}^{k}
\sum_{\mathbf{u},\mathbf{v}\in\mathcal{V}_j}
d_L(u_l,v_l).
$
For the Lee weight function, each set $\mathcal{U}_i=\{\mathbf{u}\in\mathbb{Z}_q^k:\mathrm{w}_L(\mathbf{u})=i\}$ is invariant under coordinate permutations, implying that the $l$th coordinate column is a permutation of the first column for all $l\in[k]$ (\ref{eq:sum_distance_simp}). Since
$
\mathcal{V}_j
=
\bigcup_{i=jT}^{\min\{(j+1)T-1,kp\}}
\mathcal{U}_i,
$
and the sets $\{\mathcal{U}_i\}_{i=0}^{kp}$ are pairwise disjoint, $\mathcal{V}_j$ is also invariant under coordinate permutations. Hence,
$
\sum_{\mathbf{u},\mathbf{v}\in\mathcal{V}_j}
d_L(u_l,v_l)
=
\sum_{\mathbf{u},\mathbf{v}\in\mathcal{V}_j}
d_L(u_1,v_1),
$
for all $l\in[k]$, and therefore
$
\sum_{\mathbf{u},\mathbf{v}\in\mathcal{V}_j}
d_L(\mathbf{u},\mathbf{v})
=
k
\sum_{\mathbf{u},\mathbf{v}\in\mathcal{V}_j}
d_L(u_1,v_1).
$
Summing over all $j$ establishes 
\begin{equation} \small
	\label{eq:lee_dist_symmetry}
	\sum_{j=0}^{E-1}
	\sum_{\mathbf{u},\mathbf{v}\in\mathcal{V}_j}
	d_L(\mathbf{u},\mathbf{v})
	=
	k
	\sum_{j=0}^{E-1}
	\sum_{\mathbf{u},\mathbf{v}\in\mathcal{V}_j}
	d_L(u_1,v_1).
\end{equation}

Equations \eqref{eq:card_preim_LWD} and \eqref{eq:lee_dist_symmetry} provide the values of $N_j$ and the corresponding sum of pairwise Lee distances within each set $\mathcal{V}_j$, respectively. Substituting these expressions into Corollary \ref{cor:Lee_plotkin} gives the following corollary.

\begin{cor}\label{cor:plotkin_lee_distribution}
	The optimal redundancy $r_L^{\Delta_T}(q,k,t)$ for the Lee weight distribution function satisfies
	\begin{equation}
		\label{eq:plotkin_lee_distribution}
			r_L^{\Delta_T}(q,k,t)
			\ge
			\frac{1}{Sq^{2k-1}}
			\left[
			(2t+1)
			\left(
			q^{2k}
			-\sum_{j=0}^{E-1}N_j^2
			\right)
			+
			k
			\sum_{j=0}^{E-1}
			\sum_{\mathbf{u},\mathbf{v}\in\mathcal{V}_j}
			\!\!d_L(u_1,v_1)
			\right]
			-k,
	\end{equation}
	where $N_j
	=
	\sum_{i=jT}^{\min\{(j+1)T-1,kp\}}
	n_i$, 
	and $n_i$ is given by (\ref{eq:ni_lee}).
\end{cor}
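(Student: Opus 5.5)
The bound is obtained by specializing Corollary \ref{cor:Lee_plotkin} to the partition $\{\mathcal{V}_j\}_{j=0}^{E-1}$ induced by $\Delta_T$. First I will check that this is exactly the family of preimage sets of $\Delta_T$, i.e., that the $\mathcal{V}_j$ are nonempty, pairwise disjoint, and cover $\mathbb{Z}_q^k$.

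\begin{itemize}
\item \emph{Values of $\Delta_T$.} A vector $\mathbf{u}$ satisfies $\Delta_T(\mathbf{u})=j$ if and only if $jT\le w_L(\mathbf{u})\le (j+1)T-1$. Lee weights range over $\{0,1,\ldots,kp\}$, so $\Delta_T$ takes exactly the values $0,1,\ldots,\lfloor kp/T\rfloor$. Their number is $\lfloor kp/T\rfloor+1=\lceil (kp+1)/T\rceil=E$.
\item \emph{The sets $\mathcal{V}_j$.} The preimage of $j$ is $\bigcup_i \mathcal{U}_i$ over $jT\le i\le\min\{(j+1)T-1,kp\}$, which is the definition of $\mathcal{V}_j$. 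The truncation at $kp$ only matters for the last class.
\item \emph{Nonemptiness.} Each $\mathcal{U}_i$ with $0\le i\le kp$ is nonempty; for instance, spread weight $i$ using symbols of Lee weight at most $p$.
\item \emph{Sizes.} Since the $\mathcal{U}_i$ are pairwise disjoint, $N_j=\sum_i n_i$ as in (\ref{eq:card_preim_LWD}), with $n_i$ given by (\ref{eq:ni_lee}).
\end{itemize}

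Next I will rewrite the intra-class distance sum using (\ref{eq:lee_dist_symmetry}). This requires each $\mathcal{V}_j$ to be invariant under coordinate permutations. That holds because $w_L(\mathbf{u})=\sum_l w_L(u_l)$ is a symmetric function of the coordinates. Hence a permutation $\pi$ maps $\mathcal{U}_i$ to itself, and therefore maps $\mathcal{V}_j$ to itself.

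Invariance gives, for each coordinate $l$, a bijection between $\{\mathbf{u}\in\mathcal{V}_j:u_l=\alpha\}$ and $\{\mathbf{u}\in\mathcal{V}_j:u_1=\alpha\}$. This is the same transposition argument used before (\ref{eq:sum_distance_simp}). Consequently, the multiset of values in the $l$-th coordinate over $\mathcal{V}_j$ equals the multiset of values in the first coordinate. Writing
\[
\sum_{\mathbf{u},\mathbf{v}\in\mathcal{V}_j} d_L(u_l,v_l)=\sum_{\alpha,\beta\in\mathbb{Z}_q}|A_l(\alpha)||A_l(\beta)|\,d_L(\alpha,\beta),
\]
this sum is independent of $l$. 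Summing over the $k$ coordinates and then over $j$ yields (\ref{eq:lee_dist_symmetry}).

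Finally, I will substitute $E$, the $N_j$, and (\ref{eq:lee_dist_symmetry}) into Corollary \ref{cor:Lee_plotkin}, with $S=S_L$ from (\ref{eq:lee}). Here $\sum_i n_i^2$ is replaced by $\sum_j N_j^2$, and the intra-class distance term by $k\sum_j\sum_{\mathbf{u},\mathbf{v}\in\mathcal{V}_j} d_L(u_1,v_1)$. This gives the stated inequality directly.

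The only delicate step is the bookkeeping at the top class. Both the count $E$ and the truncation $\min\{(j+1)T-1,kp\}$ must be handled correctly, so that no empty class is included and the classes exactly partition $\mathbb{Z}_q^k$. Once that is settled, everything else is direct substitution of earlier results.
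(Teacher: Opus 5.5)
Your proposal is correct and follows the paper's own argument. You write $\mathcal{V}_j$ as a union of the Lee-weight classes $\mathcal{U}_i$, obtain $N_j$ from disjointness, and use invariance of $\mathcal{V}_j$ under coordinate permutations to reduce the intra-class distance sum to $k$ times the first-coordinate sum as in (\ref{eq:lee_dist_symmetry}), then substitute into Corollary~\ref{cor:Lee_plotkin} with $S=S_L$. Your explicit checks that $E=\lfloor kp/T\rfloor+1=\lceil (kp+1)/T\rceil$ and that every $\mathcal{V}_j$ is a nonempty preimage set fill in details the paper leaves implicit, without changing the route.
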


\subsection{Modular sum function}
We next consider the modular sum function, which computes the sum of the components of a vector modulo $q$. FCCs for this function were studied in \cite{HRS}. 
\begin{defn}[Modular Sum Function \cite{HRS}]
	The modular sum function $S_m:\mathbb{Z}_{q}^{k}\to\mathbb{Z}_{q}$ is defined by
	$
	S_m(\mathbf{u})
	=
	\left(\sum_{i=1}^{k}u_i\right)\bmod q
	$
	for $\mathbf{u}\in\mathbb{Z}_{q}^{k}$.
	Moreover, $\operatorname{Im}(S_m)=\mathbb{Z}_q$, and $|\operatorname{Im}(S_m)|=q$.
\end{defn}
We first show that the modular sum function is linear.
Let $\mathbf{x},\mathbf{y}\in\mathbb{Z}_q^k$ and
$\alpha,\beta\in\mathbb{Z}_q$. Then $\allowbreak S_m(\alpha\mathbf{x}+\beta\mathbf{y})
=\left(\sum_{i=1}^k(\alpha x_i+\beta y_i)\right)\bmod q
= \left(\alpha\sum_{i=1}^k x_i+\beta\sum_{i=1}^k y_i\right)\bmod q
=\alpha S_m(\mathbf{x})+\beta S_m(\mathbf{y})$.

Hence, $S_m$ satisfies the defining property of a linear function and is therefore linear. The following corollary is obtained by specializing the Plotkin-type bound for linear functions to the modular sum function.
\begin{cor}\label{bound_mod_sum}
	Let $S_m:\mathbb{Z}_q^k \to \mathbb{Z}_q$ be the modular sum function defined by
	$S_m(\mathbf{u})=\left(\sum_{i=1}^{k}u_i\right)\bmod q$. Then the optimal redundancy of an FCLC satisfies
	\begin{equation}
		r_L^{S_m} (q,k,t)
		\ge
		\frac{(q-1)(2t+1)}{S}
		+\frac{k}{q}
		-k.
	\end{equation}
\end{cor}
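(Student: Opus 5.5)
The bound follows by specializing Corollary \ref{cor_linear} (translation-invariant form, i.e. (\ref{bound_linear_Lee})) to $f=S_m$. We have already shown that $S_m$ is linear. Its image is $\mathbb{Z}_q$, so we take $l=1$. The Lee distance is translation-invariant, so (\ref{bound_linear_Lee}) applies with $S=S_L$. With $l=1$ its first term becomes
$$\frac{q(2t+1)(1-q^{-1})}{S}=\frac{(2t+1)(q-1)}{S}.$$
It therefore remains to show that the second term $\frac{s}{Sq^{k-1}}$ equals $\frac{k}{q}$. Equivalently, we must show that
$$s=\sum_{\mathbf{w}\in ker(S_m)} w_L(\mathbf{w}) = k\,q^{k-2}S_L .$$

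To compute $s$, write $ker(S_m)=\{\mathbf{u}\in\mathbb{Z}_q^k:\sum_i u_i\equiv 0 \bmod q\}$. This set has $q^{k-1}$ elements, since $u_1,\dots,u_{k-1}$ are free and $u_k$ is then determined. Because $w_L(\mathbf{w})=\sum_{j=1}^k w_L(w_j)$, we get
$$s=\sum_{j=1}^{k}\ \sum_{\mathbf{w}\in ker(S_m)} w_L(w_j).$$
Fix a coordinate $j$ and a symbol $a\in\mathbb{Z}_q$. Assume $k\ge 2$. Setting $w_j=a$, choose any $k-2$ of the remaining coordinates freely; the last coordinate is then forced. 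Hence exactly $q^{k-2}$ kernel vectors have $w_j=a$, so the $j$-th coordinate is uniformly distributed over $\mathbb{Z}_q$ across the kernel. Consequently
$$\sum_{\mathbf{w}\in ker(S_m)} w_L(w_j)=q^{k-2}\sum_{a\in\mathbb{Z}_q}w_L(a)=q^{k-2}S_L,$$
where the last step uses $\sum_{a}d_L(a,0)=S_L$ from (\ref{eq:lee}). Summing over the $k$ coordinates gives $s=kq^{k-2}S_L$, hence $\frac{s}{S_Lq^{k-1}}=\frac{k}{q}$, which completes the proof after substitution.

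The main point needing care is the uniform-marginal count. It requires $k\ge 2$. For $k=1$ the kernel is $\{0\}$, so $s=0$, and (\ref{bound_linear_Lee}) gives a weaker bound than the displayed one. I will therefore state the result for $k\ge 2$, where the calculation above is routine.
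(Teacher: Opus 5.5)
Your proposal is correct and follows the same route as the paper: you specialize (\ref{bound_linear_Lee}) with $l=1$ and obtain $s=kq^{k-2}S_L$ from the fact that each coordinate of $\ker(S_m)$ takes every value of $\mathbb{Z}_q$ exactly $q^{k-2}$ times. Your restriction to $k\ge 2$ is also well taken: the paper's count $q^{k-2}$ tacitly needs it, and for $k=1$ the displayed bound actually fails (e.g.\ for $q=2$ it forces $r\ge 2t+1$, whereas repeating the message symbol $2t$ times already achieves $r=2t$).
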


\begin{IEEEproof}
	Since the modular sum function is linear, the preimages of its values are precisely the cosets of
	$
	\ker(S_m)=\{\mathbf{u}\in\mathbb{Z}_q^k:\sum_{i=1}^{k}u_i\equiv0\pmod q\}
	$
	in $\mathbb{Z}_q^k$. Hence,
	$
	\mathcal{U}_{\alpha_i}=\mathbf{u}_i+\ker(S_m)
	$
	for some $\mathbf{u}_i$, and
	$
	n_i=|\mathcal{U}_{\alpha_i}|=q^{k-1}
	$
	for all $i$. Let
	$
	s=\sum_{\mathbf{w}\in\ker(S_m)}\mathrm{w}_L(\mathbf{w}).
	$
	Fix $j\in[k]$ and $a\in\mathbb{Z}_q$. The number of vectors in $\ker(S_m)$ with $u_j=a$ is $q^{k-2}$, since $k-2$ coordinates are arbitrary and the remaining coordinate is uniquely determined by the kernel condition. Thus, each symbol of $\mathbb{Z}_q$ appears exactly $q^{k-2}$ times in every coordinate position of $\ker(S_m)$. Using the additivity of the Lee weight,
	$s=\sum_{\mathbf{w}\in\ker(f)}\sum_{j=1}^{k}\mathrm{w}_L(w_j)
	=\sum_{j=1}^{k}q^{k-2}\sum_{a\in\mathbb{Z}_q}\mathrm{w}_L(a)
	=kq^{k-2}S$.
	
	Substituting $l=1$ and $s=kq^{k-2}S$ in (\ref{bound_linear_Lee}), we obtain the following explicit Plotkin-type bound for the modular sum function. 
	\small
	\begin{align} 
		r_L^{S_m}(q,k,t) \ge\;
		&\frac{(q-1)(2t+1)}{S_L}
		+\frac{k}{q}
		-k. \notag
	\end{align}
\end{IEEEproof}
The bound in Corollary~\ref{bound_mod_sum} is applicable for all values of $t$, whereas the simplified Plotkin-type bound for the modular sum function in \cite{HRS} requires $t\ge \frac{(\lfloor \frac q2 \rfloor-1)}{2}$ when $q$ is odd and $t\ge \frac{(\frac q2-1)}{2}$ when $q$ is even. Consequently, for larger alphabet sizes, the existing simplified bound is not applicable for small values of $t$.

\section{Conclusion}\label{sec_concl}
  In this paper, we considered FCCs with the Wyner-Graham distance framework, which includes the Hamming and Lee distances as special cases. We obtained a general Plotkin-type bound on the optimal redundancy of FCCs under Wyner–Graham distances, which depends only on the cardinalities of the preimage sets and the sum of pairwise distances among vectors within each preimage set. Compared to existing Plotkin-type bounds applicable to arbitrary functions, the proposed bound offers a computational advantage, as it requires fewer computations to evaluate. We then obtained a simplified Plotkin-type bound for linear functions under the Wyner-Graham distance. We further derived simplified bounds for specific functions of interest, such as the Hamming weight function, the Hamming weight distribution function, the Lee weight function, and the Lee weight distribution function, by characterizing the cardinalities of their preimage sets and the sums of pairwise distances among vectors within each preimage set. One direction of future research is to extend the approach developed in this paper to derive analogous bounds for FCCs under other distance measures, such as the $b$-symbol distance and the sum-rank distance, which do not fall within the Wyner-Graham distance framework.

\end{document}